\documentclass[12pt,letterpaper,leqno]{article} 
\usepackage{amsmath,amsfonts,amssymb,amsthm,enumerate,enumitem,multirow,array,graphicx,lscape,subcaption,lastpage,color,epstopdf,setspace,mathrsfs,listings,apptools,natbib,algorithmic,algorithm,cite}

\usepackage[letterpaper,margin=1in]{geometry}
\usepackage[bookmarksdepth=2,colorlinks=false]{hyperref}
\usepackage[OT1]{fontenc}
\usepackage{titlesec,titling}
\pretitle{\begin{center}\Large\bfseries}
\posttitle{\par\end{center}}
\preauthor{\begin{center}\lineskip 0.5em \scshape\begin{tabular}[t]{c}}
\postauthor{\end{tabular}\par\end{center}}
\predate{\hphantom}
\postdate{\vskip -1em}

\providecommand{\keywords}[1]{\par{\scshape Keywords:} #1}
\providecommand{\jelcodes}[1]{\par{\scshape JEL Codes:} #1}
\titleformat{\section}[runin]{\normalfont\bfseries}{\relax\thesection.~}{0pt}{}[.]
\AtAppendix{\titleformat{\section}[block]{\normalfont\scshape\filcenter}{Appendix~\thesection.~}{0pt}{}[]}
\titleformat{\subsection}[runin]{\normalfont\IfAppendix{\bfseries}{\itshape}}{\textup{\thesubsection.}~}{0pt}{}[.]
\titleformat{\subsubsection}[runin]{\normalfont\itshape}{\textup{\thesubsubsection.}~}{0pt}{}[.]
\titlespacing{\section}{\parindent}{1em}{1em}
\titlespacing{\subsection}{\parindent}{1em}{1em}
\titlespacing{\subsubsection}{\parindent}{1em}{1em}
\newtheoremstyle{imsthm}{1em}{1em}{\itshape}{\parindent}{\scshape}{.}{.5em}{}
\AtAppendix{\numberwithin{thm}{section}}
\usepackage[font=small,labelfont=sc]{caption}
\theoremstyle{imsthm}
\newtheorem{thm}{Theorem}
\newtheorem{lmm}[thm]{Lemma}
\newtheorem{pro}[thm]{Proposition}
\newtheorem{crl}[thm]{Corollary}
\newtheoremstyle{imsdfn}{1em}{1em}{\normalfont}{\parindent}{\scshape}{.}{.5em}{}
\theoremstyle{imsdfn}

\AtAppendix{\numberwithin{equation}{section}}
\usepackage{natbib}
\bibliographystyle{apalike}
\setlength{\bibsep}{0pt}

\setlist{noitemsep}


\DeclareMathOperator*{\argmax}{argmax}

\usepackage{tikz}
\usetikzlibrary{calc,fit,positioning}
\usetikzlibrary{decorations.pathreplacing}
\usepackage{epigraph}

\setlength\epigraphwidth{.8\textwidth}
\setlength\epigraphrule{0pt}

\begin{document}

\title{Disclosure Games with Large Evidence Spaces}
\author{Shaofei Jiang\thanks{Department of Economics, the University of Texas at Austin (email: shaofeij@utexas.edu). I thank V Bhaskar, Barton Lipman, Maxwell Stinchcombe, Caroline Thomas, and seminar participants at UT Austin, Stony Brook Game Theory Conference, Midwest Economic Theory Conference, and Texas Economic Theory Camp for helpful comments. All remaining errors are my own. This paper was previously titled ``Lying by omission.''}}

\maketitle
\thispagestyle{empty}

\onehalfspacing

\begin{abstract}
We study a disclosure game with a large evidence space. There is an unknown binary state. A sender observes a sequence of binary signals about the state and discloses a left truncation of the sequence to a receiver in order to convince him that the state is good. We focus on truth-leaning equilibria (cf. \citet{hart_kremer_perry_2017}), where the sender discloses truthfully when doing so is optimal, and the receiver takes off-path disclosure at face value. In equilibrium, seemingly sub-optimal truncations are disclosed, and the disclosure contains the longest truncation that yields the maximal difference between the number of good and bad signals. We also study a general framework of disclosure games which is compatible with large evidence spaces, a wide range of disclosure technologies, and finitely many states. We characterize the unique equilibrium value function of the sender and propose a method to construct equilibria for a broad class of games.

\keywords{Disclosure games, Hard evidence, Communication}
\jelcodes{C72, D82, D83}
\end{abstract}

\epigraph{\itshape Seldom, very seldom does complete truth belong to any human disclosure; seldom can it happen that something is not a little disguised, or a little mistaken; but where, as in this case, though the conduct is mistaken, the feelings are not, it may not be very material.}{---Jane Austen, \itshape Emma}

\section{Introduction} \label{sec1}

Eliciting information from a biased party is challenging. Verifiability of information alleviates the problem by eliminating lies from disclosure, but rarely can we exclude omission of information. For example, new hedge funds operate on internal capital in an incubation period before being offered to outside investors. Potential investors do not observe when the incubation period starts. When a hedge fund turns full fledged, a part of its trading record from the incubation period is disclosed as a guidance to future returns. Since investors do not know the starting date of the incubation period, the hedge fund manager can truncate the trading record from the left and disclose only the hedge fund's most recent returns in order to convince investors that the hedge fund's average return is high. The manager may prefer doing so to disclosing the entire trading record if the hedge fund has relatively more low returns at the beginning of the incubation period. The investors, on the other hand, do not take the disclosed trading record at face value and may suspect that the fund's actual return is lower than the disclosed part appears.

This is an example of a \emph{disclosure game}. A sender has a piece of hard evidence about an unknown state of the world and tries to influence the posterior belief of a receiver by strategically disclosing her evidence with possible omission. A \emph{technology of disclosure} specifies, for every possible evidence endowment, the set of messages that the sender can disclose. The technology is common knowledge, and the sender cannot make ex ante commitment to truthfully disclosing her evidence without omission. Hence, the receiver is skeptical about the disclosed message and believes that unfavorable evidence may be concealed by the sender.

There are two distinct features of the hedge fund example. First, the amount of evidence is potentially unbounded. Regardless of how long a trading record the manager discloses, it is always possible that the actual incubation period is longer, so the investors cannot rule out that some low returns in the early stage of the incubation period are concealed by the fund manager. Therefore, skepticism about the disclosed trading record remains. This feature is common in other environments. For example, when disclosing proprietary information, due to the large variety of metrics that a firm can disclose, the market is never certain that all relevant information is disclosed regardless of how much information the firm discloses.

Second, the returns have a sequential structure, and the technology of disclosure permits only \emph{left censoring}. That is, the fund manager can misreport the starting date of the incubation period and disclose a left truncation of the trading record, but it is often impossible for her to, say, disclose returns only for periods in which the fund performs well. Different scenarios may have different technologies of disclosure. A commonly studied technology of disclosure allows the sender to disclose any subset of her evidence. For example, a lawyer can present any evidence that she believes can help her case and try to exclude other evidence from being presented in the court, and a car dealer can focus on the selling points of a car and avoid mentioning where the car falls behind.

We model the interaction between the hedge fund manager and the investors as a disclosure game with left censoring technology. The sender is the hedge fund manager, and the receiver is a representative investor. The manager is endowed with a trading record, which is a dated sequence of binary signals, high or low. Each signal indicates the return in a week in the incubation period. The trading record is informative of whether the hedge fund's trading strategy is a success. If the hedge fund is a success, it has frequent high returns during the incubation period, whereas in the case of a failure, high returns occur with lower frequency. The manager discloses a message, which is a left truncation of the trading record, to the investor. The amount of investment the hedge fund receives is an increasing function of the investor's posterior belief on the fund being a success. Therefore, the manager's objective is to convince the investor that the hedge fund is a success.

\subsection{Preview of results}

In order to solve this disclosure game with left censoring technology, we study a general framework of disclosure games which is compatible with large evidence spaces, a wide range of disclosure technologies, and any finite states. We focus on perfect Bayesian equilibria satisfying the truth-leaning refinement by \citet{hart_kremer_perry_2017}. That is, the sender truthfully discloses her evidence endowment whenever it is optimal to do so, and the receiver regards any off-path message as the truthful disclosure of the sender's evidence. This refinement follows the intuition that there is a ``slight inherent advantage'' for the sender to disclose truthfully. Our characterization of truth-leaning equilibrium consists of three parts. First, we characterize the unique equilibrium value function of the sender and propose an induction process to solve it. This implies that the receiver's strategy is the same in all truth-leaning equilibria. Second, we give another induction process to construct the equilibrium strategies and beliefs. Third, we prove a sufficient condition for equilibrium existence. The condition is widely satisfied, so for a broad class of disclosure games, the two induction processes yield truth-leaning equilibria.

Applying these results, we show that the disclosure game with left censoring technology between the hedge fund manager and the investor has the following equilibrium features.

First, initial low returns are disclosed in equilibrium. This is somewhat surprising, since the sender can omit these initial low returns and report a shorter truncation which contains fewer low returns. In terms of face value, disclosing initial low returns is never optimal. In fact, in our setting, the posterior belief on whether the hedge fund is a success depends only on the difference between the number of high returns and the number of low returns in the trading record. Hence, a seemingly optimal strategy for the manager is to disclose a \emph{maximal difference truncation} of the trading record, i.e., a left truncation which maximizes the difference between the number of high and low returns among all truncations of the trading record. However, in equilibrium, the manager may disclose a longer truncation, and the investor is not influenced by the redundant unfavorable evidence at the beginning of the disclosed truncation.

To understand the intuition, consider a trading record such that there are more low returns than high returns in every nonempty left truncation of it. For such trading record, the unique maximal difference truncation is the empty history--that is, truncating away all returns is better than disclosing any nonempty truncation which contains more low returns than high returns in terms of face value. However, suppose that the sender discloses the empty history whenever she has one such trading record. Then observing the empty history, the receiver's posterior belief on the hedge fund being a success is so low that some sender types whose trading records have just a few more low returns than high returns may find it profitable to deviate to disclosing truthfully, since the receiver takes off-path disclosure at face value. Hence, in a truth-leaning equilibrium, the sender discloses different messages, including seemingly sub-optimal ones, and those seemingly sub-optimal messages lead to the same receiver action as the maximal difference truncation.

Second, when there are more than one maximal difference truncations, the equilibrium message always contains the longest maximal difference truncation (it may be even longer, as is explained above). Suppose that in an equilibrium, a sender has two maximal difference truncations and discloses the shorter one with positive probability. Then the investor's posterior belief after observing the shorter truncation must be strictly lower than that after observing the longer truncation, because the longer truncation is not feasible for some sender types which have shorter trading records. Hence, disclosing the longer truncation is a profitable deviation for the sender. In equilibrium, sender types with longer trading records also disclose longer truncations, so that sender types with shorter trading records who cannot disclose longer truncations pool on disclosing shorter truncations.

\subsection{Related literature}

This paper is related to an extensive literature on verifiable disclosure games with specific evidence structures and technologies of disclosure. \citet{grossman_1981} and \citet{milgrom_1981} show unraveling, i.e., full revelation of evidence in equilibrium, if the sender can prove any true claim. More recent work studies the case where the amount of the sender's evidence is uncertain, and the sender cannot prove her lack of evidence. For example, \citet{dye_1985} studies a scenario where the sender may have no evidence, and a sender with some evidence can either disclose truthfully or pretend to have no evidence; \citet{shin_2003} studies a game where evidence consists of binary signals, and the sender can disclose any subset of the signals to the receiver; \citet{wolinsky_2003} studies a case where the sender's evidence is a real number, and the sender can disclose any number bounded by her evidence to the receiver. A common theme of these papers is that unraveling fails when the amount of the sender's evidence is uncertain, and the sender cannot prove her lack of evidence. In the current paper, we study a disclosure game with an infinite evidence space and left censoring technology and find a similar result. Since the receiver is uncertain about the the length of the incubation period, the sender can pretend to possess a shorter trading record rather than having a trading record that has more low returns.

Another feature of our model of left censoring disclosure is that the sender uses seemingly sub-optimal messages in equilibrium. Our explanation complements previous studies which explain similar observations. \citet{dziuda_2011} models a disclosure game with strategic and honest senders. The honest type always reveals the evidence without omission, so her message is discounted less by the receiver. The strategic type therefore has an incentive to mimic the honest type by reporting some seemingly unfavorable evidence. In our model, we only have the strategic type. \citet{guttman_kremer_skrzypacz_2014} study the dynamic disclosure of firm performance, and show that due to dynamic considerations, later reports are interpreted as more favorable although the timing of obtaining signals is independent of the firm's value. Our model does not have dynamic considerations. Moreover, it is worth pointing out that the informativeness of each signal in the trading record is the same in our model, so our result does not arise because the receiver values earlier signals differently.

This paper also makes methodological contributions to solving general disclosure games. We model technologies of disclosure as preorders on an evidence space. This representation is consistent with the standing assumptions in the literature on verifiable disclosure (e.g., \citet{bull_watson_2007}, and \citet{ben-porath_lipman_2012}) but is more general in that we allow the evidence space to be infinite. We characterize the sender's equilibrium value function in any disclosure game and construct equilibria for a broad class of games. Two papers that are the closest to this one are \citet{hart_kremer_perry_2017} and \citet{rappoport_2017}. \citet{hart_kremer_perry_2017} propose the truth-leaning refinement that we use in this paper. They show that any truth-leaning equilibrium is receiver optimal. That is, truth-leaning equilibria yield the same ex post payoffs as the optimal mechanism where the receiver commits to a reward scheme ex ante. However, they do not characterize truth-leaning equilibria or the optimal mechanism. \citet{rappoport_2017} characterizes the sender's payoffs in the receiver optimal equilibrium. This characterization is similar to ours, but our methods are very different. \citet{rappoport_2017} essentially solves a mechanism design problem, and as a result, the characterization gives little insight on equilibrium strategies without commitment. Moreover, the algorithm proposed to solve the equilibrium payoffs does not apply to disclosure games when the evidence space is infinite, e.g., the hedge fund example. On the other hand, we prove our characterization of the equilibrium value function by induction, which allows us to construct truth-leaning equilibria even when the evidence space is infinite.

\subsection{Outline of the paper}

The rest of the paper is organized as follows. Section \ref{sec2} presents the general framework of disclosure games. Section \ref{sec3} characterizes truth-leaning equilibria for general disclosure games. Section \ref{sec4} studies the hedge fund example as a disclosure game with left censoring technology. The last section concludes. All proofs are relegated to the appendix.

\section{Disclosure Games} \label{sec2}

There are two stages. Two players, a sender (she) and a receiver (he), move sequentially. Let \(\Omega=\{\omega_1<\omega_2<\dots<\omega_N\}\) be a finite set of states. At the outset of the game, Nature chooses a state \(\omega\in\Omega\) according to a common prior. Let \(\pi_n>0\) be the probability of state \(\omega_n\). Neither player observes the realized state.\footnote{We will assume that the sender's payoff is independent of the realized state, so it does not affect our analysis if the realized state is known to the sender.} In the first stage, Nature chooses a piece of hard evidence, denoted \(e\), according to a state-dependent distribution. The sender privately observes the evidence and sends a message \(m\) to the receiver. In the second stage, the receiver observes the message and takes an action \(a\in\mathbb{R}\).

\subsection{Evidence}\label{sec2.1}

The evidence space \(E\) is exogenously given and contains finite or countably many elements. Let \(F_1,F_2,\dots,F_N\) be distributions on \(E\). In the first stage, the sender privately observes a piece of evidence \(e\in E\), which is referred to as her \emph{evidence endowment}. If the realized state of the world is \(\omega_n\), \(e\) is a random draw from distribution \(F_n\). That is, \(F_n(e)\) is the probability that the sender's evidence endowment is \(e\) conditional on the state \(\omega_n\). Let \(F\) be the unconditional distribution of the sender's evidence endowment, i.e., \(F(e)=\sum_{j=1}^N F_n(e)\pi_n\). We assume that \(F(e)>0\) for all \(e\in E\).

\subsection{Messages and technologies of disclosure}\label{sec2.2}

After observing her evidence endowment \(e\), the sender reports a message \(m\in \mathcal{M}(e)\) to the receiver, where \(\mathcal{M}(e)\) is the set of messages the sender can feasibly disclose given evidence endowment \(e\). The collection of the sender's feasible messages \(\{\mathcal{M}(e)\}_{e\in E}\) describes the \emph{technology of disclosure}. We consider a broad class of technologies satisfying the following assumptions:\footnote{(A1) through (A3) are standard in the literature (see, for example, \citet{bull_watson_2004,ben-porath_lipman_2012,hart_kremer_perry_2017,ben-porath_dekel_lipman_2019}), and they are without loss of generality if \emph{normality} is assumed \citep{lipman_seppi_1995,bull_watson_2007}. A technology of disclosure \(\{\mathcal{M}(e)\}_{e\in E}\) satisfies normality if for all \(e\in E\), there exists \(m_e\in \mathcal{M}(e)\) such that \(m_e\in \mathcal{M}(e')\Rightarrow \mathcal{M}(e)\subset \mathcal{M}(e')\) for all \(e'\in E\). That is, if a sender with evidence \(e\) can distinguish herself from a sender with evidence \(e'\) (by disclosing some message in \(\mathcal{M}(e)\) but not in \(\mathcal{M}(e')\)), then she can do so using a ``maximal message'' \(m_e\). Hence, it is without loss of generality to consider only maximal messages, i.e., \(\{m_e\}_{e\in E}\), which can be mapped onto from the evidence space \(E\). (A4) is a technical assumption. It is automatically satisfied when the evidence space is finite.}

\begin{description}
	\item[(A1)] \(\mathcal{M}(e)\subset E\) for all \(e\in E\);
	\item[(A2)] (Reflexivity) \(e\in \mathcal{M}(e)\) for all \(e\in E\);
	\item[(A3)] (Transitivity) If \(e'\in \mathcal{M}(e)\) and \(e''\in \mathcal{M}(e')\), \(e''\in \mathcal{M}(e)\);
	\item[(A4)] For all sequences \(\{e_k\}_{k=1}^\infty\) in \(E\) such that \(\mathcal{M}(e_1)\supset \mathcal{M}(e_2)\supset\dots\), there exists \(K\geq 1\) such that \(\mathcal{M}(e_k)=\mathcal{M}(e_K)\) for all \(k\geq K\).
\end{description}

One way to interpret these assumptions is to view the sender's evidence endowment \(e\) as her type. The set of feasible messages \(\mathcal{M}(e)\) is the set of types which sender type \(e\) can imitate. (A2) says that the sender can always truthfully report her type. (A3) says that if sender type \(e\) can imitate type \(e'\), and sender type \(e'\) can imitate type \(e''\), then sender type \(e\) is able to imitate type \(e''\). (A4) is a lower bound condition. It says that our model allows for cases where every sender type \(e\) can be imitated by another type \(e'\), but it does not allow for cases where there exists a sender type \(e^\star\) who can keep imitating another type indefinitely. For example, in the hedge fund example, there are infinitely many possible realizations of trading records, and any trading record can be a truncation of a longer record. However, any trading record has a finite length, so given a trading record, there are only finitely many truncations that the sender can feasibly disclose.

Under (A1) through (A3), a technology of disclosure can be equivalently defined as a preorder \(\precsim\) on \(E\) such that \(e'\precsim e\) if and only if \(e'\in \mathcal{M}(e)\).\footnote{A \emph{preorder} \(\precsim\) is a binary relation satisfying reflexivity (\(e\precsim e\) for all \(e\)) and transitivity (\(e''\precsim e'\precsim e\Rightarrow e''\precsim e\)).} An equivalence of (A4) in terms of this preorder is as follows.
\begin{description}
	\item[(A4$^\prime$)] For all sequences \(\{e_k\}_{k=1}^\infty\) in \(E\) such that \(e_1\succsim e_2\succsim\dots\), there exists \(K\geq 1\) such that \(e_K\precsim e_k\) for all \(k\geq K\).
\end{description}

In the remainder of the paper, we shall abstract away the specific structure of the message space. A technology of disclosure is simply a preorder on the evidence space satisfying (A4$^\prime$). Given a technology \(\precsim\), the sender who is endowed with evidence \(e\) reports a message \(m\precsim e\) to the receiver.

\subsection{The receiver's action}

In the second stage, the receiver observes the sender's message \(m\) and takes an action \(a\in\mathbb{R}\).

\subsection{Payoffs}

The receiver's payoff \(u_R(a,\omega)\) depends on both his action and the realized state of the world. The receiver maximizes his expected payoff. We assume that the receiver's payoff is continuous in his action, and given any distribution on the state space, there is a unique action that maximizes the receiver's expected payoff. That is, for all \(\mu=(\mu_1,\mu_2,\dots,\mu_N)\in\Delta(\Omega)\),
\begin{equation*}
	\phi(\mu) = \argmax_{a\in\mathbb{R}}\sum_{n=1}^N \mu_n u_R(a,\omega_n)
\end{equation*}
is a real number, and \(\phi:\Delta(\Omega)\to\mathbb{R}\) is continuous. Moreover, we assume that \(\phi\) satisfies the following:
\begin{description}
	\item[(Monotonicity)] If \(\mu\) first-order stochastically dominates \(\mu'\), then \(\phi(\mu)>\phi(\mu')\);
	\item[(In-betweenness)] \(\phi\) is strictly quasiconcave and strictly quasiconvex. That is, if \(\phi(\mu)>\phi(\mu')\), then \(\phi(\mu)>\phi(\lambda\mu+(1-\lambda)\mu')>\phi(\mu')\) for all \(\lambda\in(0,1)\). 
\end{description}
The intuition is that the receiver wants to match the state of the world. He wants to choose a higher action if he believes that higher states are more likely (in the sense of first-order stochastic dominance). An example is when the receiver has quadratic loss utility \(u_R(a,\omega_n)=-(a-\omega_n)^2\). Given any posterior belief, the receiver's unique optimal action equals his expectation of the state, i.e., \(\phi(\mu)=\sum_{n=1}^N\mu_n\omega_n\).

The sender's payoff \(u_S(a,\omega)=a\) depends only on the receiver's action. Given the assumption on the receiver's payoff, the sender has an incentive to persuade the receiver that higher states are more likely. Specifically, the evidence endowment \(e\) or the realized state \(\omega\) is payoff irrelevant to the sender.

\subsection{Strategies}

A (pure) strategy of the sender is \(\mathbf{s}:E\to E\) such that \(\mathbf{s}(e)\precsim e\) for all \(e\in E\). A behavioral strategy of the sender is \(\sigma:E\to\Delta(E)\) such that \(supp(\sigma(\cdot|e))\subset LC(e)\) for all \(e\in E\), where \(LC(e)=\{m:m\precsim e\}\) is the \emph{lower contour set} of \(e\) with respect to the technology of disclosure.

A (pure) strategy of the receiver is \(\mathbf{a}:E\to\mathbb{R}\). A system of beliefs for the receiver is \(\mu:E\to\Delta(\Omega)\), where \(\mu(m)\) denotes the receiver's posterior belief after seeing message \(m\), and \(\mu_n(m)\) denotes the probability of state \(\omega_n\) under such belief. Since the receiver's expected utility maximization problem admits a unique solution given any posterior belief, the receiver only uses pure strategies in equilibrium.

\subsection{Equilibrium}

A disclosure game may have multiple Nash equilibria, including trivial ones. For example, there is a Nash equilibrium of the hedge fund example where the manager always discloses the empty history to the investor, and the investor chooses investment based on his prior belief about the hedge fund. This trivial equilibrium is also a perfect Bayesian equilibrium outcome if the investor uses critical off-path beliefs (i.e., if the manager discloses any return, the investor believes that the actual trading record contains many earlier low returns that are concealed). We focus on perfect Bayesian equilibria which satisfy the following two conditions by \citet{hart_kremer_perry_2017}:
\begin{description}
	\item[(Truth-leaning)] The sender truthfully discloses her evidence endowment \(e\) whenever disclosing \(e\) is optimal;
	\item[(Off-path beliefs)] If an off-path message \(m\) is reported, the receiver believes that the sender is disclosing her evidence endowment without omission, i.e., \(e=m\).
\end{description}

Formally, a \emph{truth-leaning equilibrium} (or simply, an \emph{equilibrium}) is a collection of the sender's strategy, the receiver's strategy, and the receiver's system of beliefs \((\sigma^\star,\mathbf{a}^\star,\mu)\) such that:
\begin{description}
	\item[(Sender optimality)] Given \(\mathbf{a}^\star\),
	\begin{equation*}
		supp(\sigma^\star(\cdot|e))\subset \argmax_{m\precsim e} \mathbf{a}^\star(m)
	\end{equation*}
	for all \(e\in E\);
	\item[(Receiver optimality)] Given \(\mu\),
	\begin{equation*}
		\mathbf{a}^\star=\phi\circ\mu;
	\end{equation*}
	\item[(Bayesian consistency)] Given \(\sigma^\star\), the receiver's posterior belief
	\begin{equation*}
		\mu_n(m)=\frac{\sum_{e\in E}\sigma^\star(m|e)F_n(e)\pi_n}{\sum_{e\in E}\sigma^\star(m|e)F(e)}
	\end{equation*}
	for all \(n\) and all on-path messages \(m\);
	\item[(Truth-leaning)] Given \(\mathbf{a}^\star\),
	\begin{equation*}
		e\in\argmax_{m\precsim e}\mathbf{a}^\star(m) \Rightarrow \sigma^\star(e|e)=1;
	\end{equation*}
	\item[(Off-path beliefs)] The receiver's posterior belief
	\begin{equation*}
		\mu_n(m)=\frac{F_n(m)\pi_n}{F(m)}
	\end{equation*}
	for all \(n\) and all off-path messages \(m\).
\end{description}

Several remarks are in order regarding the refinement. First, the intuition of truth-leaning is that there is a ``slight inherent advantage'' for the sender to tell the truth, and ``there must be good reasons for not telling it.'' \citet{hart_kremer_perry_2017} define truth-leaning equilibrium using an infinitesimally perturbed game. They show that, in a setting with finite evidence spaces, any truth-leaning equilibrium is a Nash equilibria of a perturbed game where (i) there is an infinitesimal probability that the sender's evidence endowment is revealed to the receiver regardless of what she chooses to disclose, and (ii) the sender receives an infinitesimal reward for truthfully disclosing her evidence endowment. Second, a truth-leaning equilibrium is receiver optimal and has the same payoffs as in a mechanism where the receiver commits ex ante to a reward policy.\footnote{It can be easily shown that this equivalence result in \citet{hart_kremer_perry_2017} remains true in our setting where the evidence space is not finite.} As a result, the sender's equilibrium payoff is the same across equilibria, and by the condition on the receiver's off-path beliefs, so is the receiver's equilibrium strategy. Lastly, the condition on off-path beliefs is a tie-breaking rule that applies only when disclosing the evidence endowment truthfully is optimal--it does not require the sender to disclose the optimal message that ``omits the least information'' if reporting the evidence endowment is not optimal.

\subsection{Notation}

We conclude this section by introducing some auxiliary functions and notation related to technologies of disclosure. Let \(\nu\) be a set function on \(E\) taking value in \(\Delta(\Omega)\) such that for all \(n\) and \(A\subset E\) nonempty,
\begin{equation*}
	\nu_n(A)=\frac{F_n(A)\pi_n}{F(A)}.
\end{equation*}
If \(A=\{e\}\) is a singleton, we write it as \(\nu(e)\) (rather than \(\nu(\{e\})\)). It is \emph{the sender's private belief} about the state after observing evidence \(e\), derived from Bayes' rule. By definition, \(\nu(m)\) is also the receiver's posterior belief at an off-path message \(m\) in any truth-leaning equilibrium. Let \(\xi=\phi\circ\nu\) be another set function on \(E\), and similarly, we write \(\xi(e)\) for a singleton set \(\{e\}\). \(\xi(m)\) is the \emph{face value} of a message \(m\). That is, it is the receiver's optimal action if he takes message \(m\) at face value. Given any equilibrium, let \(V:E\to\mathbb{R}\) be such that \(V(e)=\max_{m\precsim e}\mathbf{a}^\star(m)\) for all \(e\in E\). This is referred to as the sender's \emph{value function} and may depend on the choice of equilibrium.

Given a technology of disclosure \(\precsim\), we denote the asymmetric part by \(\prec\) (i.e., \(e\prec e'\Leftrightarrow e\precsim e',e'\not\precsim e\)), and the equivalence relation by \(\sim\) (i.e., \(e\sim e'\Leftrightarrow e\precsim e',e'\precsim e\)). Let \(A\subset E\) and \(L\subset A\) be nonempty. We say \(L\) is a \emph{lower contour set in} \(A\) if for all \(e\in L\) and \(e'\in A\), \(e'\precsim e\Rightarrow e'\in L\).

\section{Equilibrium Characterization} \label{sec3}

\subsection{An illustrative example} \label{sec3.1}

To illustrate our set-up and results, let us solve a simple disclosure game with the so-called \emph{linear evidence}. This can be viewed as a generalization of \citet{wolinsky_2003} to infinite evidence spaces. There are two possible states, a bad state (\(\omega_1=0\)) and a good state (\(\omega_2=1\)), with equal prior (i.e., \(\pi_1=\pi_2=\frac{1}{2}\)). The receiver has quadratic loss payoff function \(u_R(a,\omega)=-(a-\omega)^2\), so given any posterior belief \(\mu=(\mu_1,\mu_2)\in\Delta(\{0,1\})\) his optimal action equals his posterior belief on the good state, i.e., \(\phi(\mu)=\mu_2\). The evidence space \(E=\mathbb{N}\), and the technology of disclosure is the linear order \(\leq\) on \(\mathbb{N}\). That is, the sender with evidence \(e\) can disclose any natural number \(m\leq e\).

The distributions of evidence are given in Table \ref{table1}. Namely, regardless of the state, the sender has evidence \(e=0\) with probability \(\frac{1}{2}\); if \(e\geq 1\), the evidence is distributed geometrically such that in the bad state, the sender has evidence \(e\) with probability \(F_1(e)=\frac{1}{3^e}\), and in the good state, the probability is \(F_2(e)=\frac{1}{2^{e+1}}\). We may interpret evidence \(e=0\) as being uninformative, since after seeing evidence \(e=0\), the sender's private belief about the state is the same as the prior. For all evidence \(e\geq1\), the sender's private belief on the good state \(\nu_2(e)\) is strictly increasing in \(e\). That is, the larger the evidence, the more likely the state is good.

\begin{table}[th]
\centering
\renewcommand{\arraystretch}{1.5}
\begin{tabular}{cccccccc}
	& 0 & 1 & 2 & 3 & $\dots$ & \(e\) & $\dots$ \\ \cline{2-8} 
\multicolumn{1}{c|}{$F_1$} & \multicolumn{1}{c|}{$\frac{1}{2}$} & \multicolumn{1}{c|}{$\frac{1}{3}$} & \multicolumn{1}{c|}{$\frac{1}{9}$} & \multicolumn{1}{c|}{$\frac{1}{27}$} & \multicolumn{1}{c|}{$\dots$} & \multicolumn{1}{c|}{$\frac{1}{3^e}$} & \multicolumn{1}{c}{$\dots$} \\ \cline{2-8}
\multicolumn{1}{c|}{$F_2$} & \multicolumn{1}{c|}{$\frac{1}{2}$} & \multicolumn{1}{c|}{$\frac{1}{4}$} & \multicolumn{1}{c|}{$\frac{1}{8}$} & \multicolumn{1}{c|}{$\frac{1}{16}$} & \multicolumn{1}{c|}{$\dots$} & \multicolumn{1}{c|}{$\frac{1}{2^{e+1}}$} & \multicolumn{1}{c}{$\dots$} \\ \cline{2-8} 
\multicolumn{1}{c|}{$\nu_2$} & \multicolumn{1}{c|}{$\frac{1}{2}$} & \multicolumn{1}{c|}{$\frac{3}{7}$} & \multicolumn{1}{c|}{$\frac{9}{17}$} & \multicolumn{1}{c|}{$\frac{27}{43}$} & \multicolumn{1}{c|}{$\dots$} & \multicolumn{1}{c|}{$\frac{3^e}{3^e+2^{e+1}}$} & \multicolumn{1}{c}{$\dots$} \\ \cline{2-8}
\end{tabular}
\caption{Distributions of evidence} \label{table1}
\end{table}

In any equilibrium, sender type \(e=1\) discloses message \(m=0\). Suppose that this is not the case. That is, there exists an equilibrium \((\sigma^\star,\mathbf{a}^\star,\mu)\) with \(\sigma^\star(1|1)>0\). By sender optimality, the receiver's action after seeing message 1 is at least his action after seeing message 0. Since the receiver's action equals his posterior belief on the good state, \(\mu_2(1)\geq\mu_2(0)\). However, \(\nu_2(1)<\nu_2(0)\), so there must be sender types \(e>1\) that disclose message 1 with positive probability. Let \(Z=\{e>1:\sigma^\star(1|e)>0\}\) be the set of such sender types. By truth-leaning and sender optimality, every piece of evidence \(e\in Z\) is off-path as a message, so \(\mu_2(1)>\mu_2(e)=\nu_2(e)\). Suppose that \(Z\) is finite, and let \(e^\star\) be the largest member in \(Z\). For all sender types \(e\in Z\), \(\nu_2(e)\leq\nu_2(e^\star)\), and for sender type \(e=1\), \(\nu_2(1)<\nu_2(e^\star)\). Therefore, after observing message 1, the receiver's posterior belief \(\mu_2(1)<\nu_2(e^\star)\). This shows that the sender type \(e^\star\) can profitably deviate to disclosing \(e^\star\) truthfully and thereby receiving \(\xi(e^\star)=\nu_2(e^\star)\), instead of disclosing message 1 and receiving \(\mathbf{a}^\star(1)=\mu_2(1)\). Hence, \(Z\) is infinite, and \(\mu_2(1)\geq\sup_{e\in Z}\nu_2(e)=1\). But this is not possible. Therefore, in any equilibrium, the sender discloses message 0 if \(e=1\), and message 1 is an off-path message.

We now argue that all sender types \(e>1\) disclose truthfully. Suppose that there exist an equilibrium \((\sigma^\star,\mathbf{a}^\star,\mu)\) and \(\hat{e}>\hat{m}\), such that \(\hat{e}>1\) and \(\sigma^\star(\hat{m}|\hat{e})>0\). By truth-leaning and sender optimality, \(\hat{e}\) is off-path as a message, and \(\mu_2(\hat{m})>\mu_2(\hat{e})=\nu_2(\hat{e})\). However, \(\nu_2(e)<\nu_2(\hat{e})\) for all \(e<\hat{e}\). Therefore, there must be some sender type \(e>\hat{e}\) such that \(\sigma^\star(\hat{m}|e)>0\). Since \(\hat{e}\) is arbitrary, this shows that there are infinitely many sender types that disclose \(m^\star\) with positive probability. Sender optimality thus requires \(\mu_2(\hat{m})=1\), which is not possible.

The above shows that the sender discloses truthfully if \(e\neq1\) and discloses message 0 if \(e=1\). Receiver optimality and the condition on off-path beliefs pin down the receiver's system of beliefs and strategy: after seeing message \(m=0\), \(\mathbf{a}^\star(0)=\mu_2(0)=\nu_2(\{0,1\})=\frac{9}{19}\); after seeing message \(m\geq 1\), \(\mathbf{a}^\star(m)=\mu_2(m)=\nu_2(m)=\frac{3^m}{3^m+2^{m+1}}\). As can be easily checked, this is the unique equilibrium of the game.

Another way to think of the equilibrium is to think of the partition \(\{\{0,1\},\{2\},\{3\},\dots\}\) of the evidence space. Sender types in the set \(\{0,1\}\) pool on disclosing message 0, and other sender types disclose truthfully. Bayes' rule requires that the receiver's on-path belief \(\mu(0)\) be equal to \(\nu(\{0,1\})\), and the off-path belief be \(\mu(1)=\nu(1)\). Lastly, receiver optimality pins down the receiver's strategy.

\subsection{Main results} \label{sec3.2}

For general disclosure games, Theorem \ref{thm1} characterizes the unique equilibrium value function of the sender. Since the receiver's action at an on-path message equals the sender's equilibrium payoff, and that at an off-path message is determined by the refinement, the receiver's strategy is the same in all equilibria. If the state space is binary, the receiver's system of beliefs is also the same, and all equilibria differ only in the sender's behavioral strategies.

\begin{thm} \label{thm1}
	A function \(V:E\to\mathbb{R}\) is the sender's value function in an equilibrium if and only if it satisfies the following conditions:
	\begin{enumerate}
		\item \(V:(E,\precsim)\to(\mathbb{R},\leq)\) is non-decreasing;
		\item \(\xi(V^{-1}(x))=x\) for all \(x\in Range(V)\);
		\item \(\xi(L)\geq x\) for all \(x\in Range(V)\) and all lower contour sets \(L\) in \(V^{-1}(x)\).
	\end{enumerate}
	Moreover, the function satisfying the above conditions is unique.
\end{thm}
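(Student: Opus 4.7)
The plan is to establish three claims: (i) any equilibrium value function satisfies (1)--(3); (ii) any two functions satisfying (1)--(3) agree; and (iii) any $V$ satisfying (1)--(3) arises from some equilibrium. Then (i) and (iii) give the iff and (i) and (ii) give the ``moreover''.

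For necessity, (1) follows from transitivity: $e'\precsim e$ implies $LC(e')\subseteq LC(e)$. For (2), every message disclosed with positive probability is on-path and carries the sender's value; summing Bayes' rule across all on-path messages carrying action $x$ collapses, via $\sum_m\sigma^\star(m|e)=\mathbf{1}[e\in V^{-1}(x)]$, to $\nu(V^{-1}(x))=\phi^{-1}(x)$. For (3), I argue by contradiction. If $\xi(L)<x$ for some lower contour $L$ in $V^{-1}(x)$, then (2) implies $\nu(V^{-1}(x)\setminus L)>\phi^{-1}(x)$, so some $e\in V^{-1}(x)\setminus L$ has $\xi(e)>x=V(e)$. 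Sender optimality rules out $e$ being off-path (else the sender profitably deviates to disclose her endowment); by the lower-contour property of $L$, every reporter of the on-path message $m=e$ lies in $V^{-1}(x)\setminus L$; the requirement $\mu(e)=\phi^{-1}(x)<\nu(e)$ combined with truth-leaning propagates the imbalance to another reporter $e'\succ e$ with the same structural defect. Iterating and invoking (A4$^\prime$) to close the chain yields the contradiction. This iterated-deviation argument is the main technical hurdle.

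For uniqueness, suppose $V_1\neq V_2$ both satisfy (1)--(3). Using (A4$^\prime$), I pick a $\precsim$-minimal $e^\star$ in $\{e:V_1(e)\neq V_2(e)\}$, and WLOG assume $V_1(e^\star)>V_2(e^\star)$; write $x=V_1(e^\star)$ and $y=V_2(e^\star)$. The set $L:=V_1^{-1}(x)\cap\{e:V_2(e)<x\}$ contains $e^\star$ and, by monotonicity of both functions, is a lower contour set in $V_1^{-1}(x)$. The $\precsim$-minimality of $e^\star$ pins down the coincidence of $V_1$ and $V_2$ strictly below $e^\star$'s $\sim$-class, allowing me to compare $\xi(L)$ against $\xi(V_1^{-1}(x))$ using (2) for both functions and derive $\xi(L)<x$, contradicting (3) for $V_1$.

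For sufficiency, given $V$ satisfying (1)--(3), I have each sender $e$ mix over her argmax messages in $LC(e)\cap V^{-1}(V(e))$, choosing weights so that every on-path message has Bayes posterior $\phi^{-1}(V(e))$; truth-leaning is enforced by reporting $e$ whenever $e$ is itself argmax. The receiver plays $\phi\circ\mu$ with off-path belief $\nu$. Conditions (2) and (3) guarantee that consistent weights exist, and (A4$^\prime$) underwrites the well-founded construction of the mixing profile.
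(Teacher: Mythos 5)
Your overall architecture (necessity, uniqueness, sufficiency) matches the paper's decomposition into Lemmas \ref{lmm.a2}--\ref{lmm.a4}, and your treatment of conditions 1 and 2 in the necessity direction is essentially the paper's argument. But two of your three claims have genuine gaps. First, your proof of condition 3 does not close. The reporters of the on-path message $m=e$ other than $e$ itself are off-path types with $\nu(e')<\phi^{-1}(x)$, not types sharing the ``same structural defect'' $\xi(e')>x$; and the iteration you describe moves \emph{upward} through $UC(e)$, whereas (A4$^\prime$) only guarantees that \emph{descending} chains stabilize --- ascending chains can be infinite (this is exactly what happens in the Section \ref{sec3.1} example, where infinitely many types may pool on a single low message). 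No iteration is needed: the paper's argument sums Bayes' rule over the set $M_L$ of on-path messages sent by types in $L$, observes that $M_L\subset L$ and that every type in $L$ reports only into $M_L$ (so the mass of $L$ is exactly exhausted), and that any additional reporters from $V^{-1}(x)\setminus L$ are off-path with $\nu<\phi^{-1}(x)$ and can only drag the posterior down; hence $\nu(L)\geq\phi^{-1}(x)$ in one step.

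Second, your sufficiency paragraph restates the goal rather than proving it: that ``consistent weights exist'' \emph{is} the substance of the theorem. The paper's proof pins down $\mathbf{a}=\min\{V,\xi\}$ and $\mu=\phi^{-1}\circ\mathbf{a}$, then builds $\sigma$ by transfinite induction over lower contour sets, maintaining the invariant that Bayes posteriors at on-path messages are weakly \emph{above} their target values (condition 3 is precisely what guarantees this slack survives each extension), extending one minimal equivalence class of the complement at a time with an explicit recursive assignment of probabilities, and invoking Zorn's lemma to reach all of $E$; none of that machinery appears in your sketch. Your uniqueness argument is closer in spirit to the paper's two-sided squeeze, but the set $V_1^{-1}(x)\cap\{e:V_2(e)<x\}$ only yields the bound $\xi\geq x$ from condition 3 applied to $V_1$; to obtain the opposing bound you need the set to also decompose into \emph{upper} contour sets inside the $V_2$-level sets, which is why the paper works with $U_v\cap L_w$ (an upper union of level sets of one function intersected with a lower union of level sets of the other) and with two points $e_1,e_2$ satisfying $V(e_1)=V(e_2)$ and $W(e_1)<W(e_2)$ to make the resulting squeeze strict. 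Minimality of $e^\star$ does not by itself deliver $\xi(L)<x$.
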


Notice that \(V^{-1}\) is a correspondence from \(Range(V)\) to \(E\), and \(\xi\) is a set function on \(E\). The ``only if'' part is straightforward. Condition 1 is due to sender optimality. In the example in Section \ref{sec3.1}, if the equilibrium payoff of the sender type \(e=2\) is less than that of \(e=0\), then disclosing 0 is a profitable deviation for the sender type \(e=2\). Condition 2 is by Bayesian consistency. In the example, message \(m=0\) is disclosed by the sender if \(e=0\) or \(e=1\). Bayes' rule therefore requires that \(\mu(0)=\nu(\{0,1\})\). Therefore, \(V(0)=\mathbf{a}^\star(0)=\phi(\mu(0))=\xi(\{0,1\})\). Notice that conditions 1 and 2 hold in any perfect Bayesian equilibrium. Condition 3 is true in truth-leaning equilibria. Suppose that, in the example, \(\xi(0)<\xi(\{0,1\})\). Then \(\xi(1)>\xi(\{0,1\})\). Since \(m=1\) is an off-path message, \(\xi(1)\) is the receiver's action after observing message \(m=1\). Therefore, disclosing truthfully is a profitable deviation for the sender type \(e=1\).

The ``if'' part is more involved. Our proof uses a straightforward induction process which applies to disclosure games with infinite evidence spaces. This process provides a way to construct equilibrium strategies from the sender's equilibrium value function.

The induction process is outlined here. Given a function \(V\) satisfying the conditions in Theorem \ref{thm1}, we want to find an equilibrium in which \(V\) is the sender's value function. First, we observe that in any equilibrium, if \(V(e)>\xi(e)\) for some sender type \(e\in E\), then \(e\) as a message is off-path, i.e., \(\mu(e)=\nu(e)\), and \(\mathbf{a}^\star(e)=\xi(e)\); if \(V(e)\leq\xi(e)\), then the sender discloses truthfully when her evidence endowment is \(e\), and \(\mathbf{a}^\star(e)=V(e)\). Since \(\xi\) is determined by the distributions of evidence \(F_1,F_2,\dots,F_N\) and \(\phi\), given any function \(V\), the set of on-path messages, the receiver's strategy, and the receiver's off-path beliefs are determined in any candidate equilibrium. The remaining task is to construct a strategy of the sender and the receiver's beliefs at on-path messages, such that the strategy is sender optimal and truth-leaning, and the beliefs are consistent with Bayes' rule. To do so, let us consider the following induction process. Suppose that we have a candidate strategy of the sender defined on a lower contour set \(L\) in \(E\) with the following properties: (i) sender optimality and truth-leaning are satisfied on \(L\), and (ii) given the sender's strategy, the receiver's sequentially rational action after observing an on-path message \(m\) in \(L\) is at least \(V(m)\). If \(L=E\), the candidate strategy is an equilibrium strategy, and \(V\) is the sender's value function in the resulted equilibrium. If \(L\subsetneq E\), condition 3 in Theorem \ref{thm1} ensures that there are slacks in property (ii), and we can extend the candidate strategy of the sender to a superset of \(L\). That is, we can find another candidate strategy which is defined on a lower contour set \(\tilde{L}\supsetneq L\) in \(E\) such that properties (i) and (ii) are satisfied on \(\tilde{L}\), and the two candidate strategies coincide on \(L\). If the evidence space is finite, this extension process solves an equilibrium strategy. If the evidence space is not finite, using transfinite induction, we show existence of an equilibrium in which \(V\) is the sender's value function.

Since the sender's value function is the same in all equilibria, we shall use the term without referring to a specific equilibrium. As is in Section \ref{sec3.1}, an equilibrium can be considered as a partition of the evidence space. The following corollary formalizes this idea using the notion of ordered partition\footnote{An \emph{ordered partition} \citep{stanley_1999} is a partition with a total order defined on the elements of the partition. Recall that a \emph{total order} \(\leq\) (on a set \(X\)) is a partial order that is also \emph{connex} (\(\forall x,x'\in X\), \(x\leq x'\) or \(x'\leq x\)).} and states an equivalent result of Theorem \ref{thm1}.

\begin{crl} \label{crl2}
	An equilibrium exists if and only if there exists an ordered partition of \(E\), denoted by \((\mathcal{A},\leq_P)\), satisfying:
	\begin{enumerate}
		\item \(\xi:(\mathcal{A},\leq_P)\to([0,1],\leq)\) is strictly increasing;
		\item For all \(A_1,A_2\in\mathcal{A}\) and \(e_1\in A_1,e_2\in A_2\), \(e_1\precsim e_2\Rightarrow A_1\leq_P A_2\);
		\item For all \(A\in\mathcal{A}\) and all lower contour sets \(L\) in \(A\), \(\xi(L)\geq\xi(A)\).
	\end{enumerate}
	The sender's value function \(V\) satisfies \(V(e)=\xi(\mathcal{A}(e))\) for all \(e\in E\), where \(\mathcal{A}(e)\in\mathcal{A}\) is such that \(\mathcal{A}(e)\ni e\). Moreover,the ordered partition satisfying the above conditions is unique.
\end{crl}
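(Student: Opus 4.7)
The plan is to leverage Theorem \ref{thm1} directly: since equilibria are in bijection with functions $V$ satisfying its three conditions, I will show that such functions $V$ are in turn in bijection with ordered partitions $(\mathcal{A},\leq_P)$ of $E$ satisfying the three conditions of the corollary. The bijection sends $V$ to the partition of $E$ by its level sets, ordered by the values; conversely, it sends an ordered partition $(\mathcal{A},\leq_P)$ to the function $V(e)=\xi(\mathcal{A}(e))$.

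For the forward direction, I would fix an equilibrium and let $V$ be the value function supplied by Theorem \ref{thm1}. Define $\mathcal{A}=\{V^{-1}(x):x\in\mathrm{Range}(V)\}$ and order two cells $A_1=V^{-1}(x_1)$, $A_2=V^{-1}(x_2)$ by $A_1\leq_P A_2$ iff $x_1\leq x_2$. Condition 1 of the corollary follows from condition 2 of Theorem \ref{thm1}: $\xi(A)=\xi(V^{-1}(x))=x$, so $\xi$ is strictly increasing on $(\mathcal{A},\leq_P)$ by construction. Condition 2 of the corollary is the monotonicity of $V$ (condition 1 of Theorem \ref{thm1}): if $e_1\precsim e_2$ then $V(e_1)\leq V(e_2)$, hence $\mathcal{A}(e_1)\leq_P\mathcal{A}(e_2)$. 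Condition 3 of the corollary is literally condition 3 of Theorem \ref{thm1} once we identify $A=V^{-1}(x)$ and note $\xi(A)=x$.

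For the reverse direction, I would take an ordered partition $(\mathcal{A},\leq_P)$ satisfying the three conditions and define $V(e)=\xi(\mathcal{A}(e))$. Condition 1 of Theorem \ref{thm1} follows from conditions 1 and 2 of the corollary: $e\precsim e'$ implies $\mathcal{A}(e)\leq_P\mathcal{A}(e')$ (condition 2), hence $\xi(\mathcal{A}(e))\leq\xi(\mathcal{A}(e'))$ (condition 1). The key step is to check that $V^{-1}(x)$ coincides exactly with the unique cell $A\in\mathcal{A}$ having $\xi(A)=x$, which is where the strict monotonicity of $\xi$ in condition 1 of the corollary is used: if $V(e)=V(e')=x$, then $\xi(\mathcal{A}(e))=\xi(\mathcal{A}(e'))$, and strict monotonicity forces $\mathcal{A}(e)=\mathcal{A}(e')$. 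Given this identification, condition 2 of Theorem \ref{thm1} reads $\xi(A)=x$ (by construction), and condition 3 is exactly condition 3 of the corollary. Finally, Theorem \ref{thm1} promotes this $V$ into a bona fide equilibrium.

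Uniqueness of $(\mathcal{A},\leq_P)$ is immediate from the uniqueness of $V$ in Theorem \ref{thm1}: any ordered partition satisfying the corollary's conditions yields a $V$, and conversely the cells and their order can be read off from $V$ via level sets and value comparison. The main subtlety I anticipate is the level-set identification in the reverse direction, which requires condition 1 of the corollary (strict, not merely weak, monotonicity of $\xi$ on $\mathcal{A}$); without strictness two distinct cells could share a value and the partition would not be recoverable from $V$. Everything else is a routine translation between the function-based and partition-based viewpoints.
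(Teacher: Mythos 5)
Your proposal is correct and follows essentially the same route as the paper: both directions translate between the value function of Theorem \ref{thm1} and the ordered partition of its level sets, with the strict monotonicity of \(\xi\) on \((\mathcal{A},\leq_P)\) doing exactly the work you identify in recovering the cells, and uniqueness inherited from the uniqueness of \(V\). No gaps.
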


We refer to the ordered partition \((\mathcal{A},\leq_P)\) in Corollary \ref{crl2} as the \emph{equilibrium partition}. The characterization of the equilibrium partition is similar to the characterization of the optimal mechanism outcome by \citet{rappoport_2017}. In the optimal mechanism, the receiver commits to choosing action \(\xi(A)\) after seeing any message in a set \(A\) in the partition. The first two conditions in Corollary \ref{crl2} ensure that truthfully revealing evidence is incentive compatible for the sender, and the last condition ensures that the mechanism is receiver optimal (i.e., there does not exist a finer partition that satisfies incentive compatibility). The similarity between our characterizations is not surprising, since the sender's ex post payoffs are the same in truth-leaning equilibria and the optimal mechanism. However, the result by \citet{rappoport_2017} only applies to finite evidence spaces and cannot be used to solve truth-leaning equilibrium strategies. With the aforementioned induction process, we are able to construct truth-leaning equilibria, even when the evidence space is infinite.

It is also clear from Corollary \ref{crl2} that the equilibrium partition is independent of the receiver's payoff function, as long as the solution to the receiver's maximization problem satisfies monotonicity and in-betweenness. That is, fixing an evidence structure and a technology of disclosure, the equilibrium partition is the same across all disclosure games we consider. If \(\phi\) is weakly monotonic and satisfies weak in-betweenness (e.g., when the receiver's action space is finite), the truth-leaning equilibrium partition still satisfies the characterization in Corollary \ref{crl2}. However, a truth-leaning equilibrium may fail to exist even if there exists an (unique) ordered partition satisfying the characterization. \citet{jiang_2020} studies finite action disclosure games in more details. In such games, the ordered partition given by Corollary \ref{crl2} characterizes purifiable equilibrium, which is an amended solution concept to truth-leaning equilibrium in finite disclosure games.

Theorem \ref{thm3} below further characterizes the equilibrium partition.

\begin{thm} \label{thm3}
	An ordered partition \((\mathcal{A},\leq_P)\) of \(E\) is the equilibrium partition if and only if every \(A\in\mathcal{A}\) is the largest lower contour set in \(E\setminus\bigcup_{A'<_P A}A'\) that minimizes \(\xi\). That is, \(A\) solves
	\begin{equation} \label{eq2.7}
		\min_{L}\xi(L) \quad s.t. \quad L~\text{is a lower contour set in}~E\setminus\bigcup_{A'<_P A}A',
	\end{equation}
	and for all \(L\) that solves (\ref{eq2.7}), \(L\subset A\).
\end{thm}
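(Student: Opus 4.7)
The plan is to prove both implications by leveraging Corollary~\ref{crl2}: for the forward direction, I assume $(\mathcal{A}, \leq_P)$ is the equilibrium partition and verify the defining property; for the reverse, I check the three conditions of Corollary~\ref{crl2} from the defining property. Throughout, write $B_A := E \setminus \bigcup_{A' <_P A} A' = \bigcup_{A' \geq_P A} A'$ and let $A_e \in \mathcal{A}$ denote the component containing $e$.

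For the forward direction, fix $A \in \mathcal{A}$. First, $A$ is a lower contour set in $B_A$: for $e \in A$ and $e' \in B_A$ with $e' \precsim e$, Corollary~\ref{crl2}(2) gives $A_{e'} \leq_P A$, while $e' \in B_A$ gives $A_{e'} \geq_P A$, so $A_{e'} = A$. Next, for any lower contour set $L$ in $B_A$, decompose $L = \bigsqcup_{A'' \geq_P A} (L \cap A'')$. The same reasoning shows each nonempty $L \cap A''$ is a lower contour set in $A''$, so Corollary~\ref{crl2}(3) yields $\nu(L \cap A'') \geq \nu(A'')$ and Corollary~\ref{crl2}(1) yields $\nu(A'') \geq \nu(A)$. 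Since $\nu$ on a disjoint union is a convex combination of the $\nu$-values on the pieces, $\nu(L) \geq \nu(A)$, hence $\xi(L) \geq \xi(A)$. If equality holds, each nonempty piece must satisfy $\nu(L \cap A'') = \nu(A'') = \nu(A)$, and strict monotonicity of $\xi$ along $\leq_P$ forces $A'' = A$, i.e.\ $L \subset A$.

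For the reverse direction, condition~2 is immediate: if $e_1 \in A_1$, $e_2 \in A_2$, $e_1 \precsim e_2$, and $A_2 <_P A_1$, then $e_1 \in A_1 \subset B_{A_2}$, and $A_2$ being a lower contour set in $B_{A_2}$ forces $e_1 \in A_2$, contradicting disjointness of components. Condition~3 follows from condition~2: if $L$ is a lower contour set in $A$ and $e \in L$, $e' \in B_A$ with $e' \precsim e$, then condition~2 forces $e' \in A$, and the lower-contour property of $L$ in $A$ puts $e' \in L$; so $L$ is a lower contour set in $B_A$, and the minimality of $A$ in $B_A$ gives $\xi(L) \geq \xi(A)$.

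Condition~1 (strict monotonicity of $\xi$ along $\leq_P$) is the main obstacle. My plan is transfinite induction on $A \in \mathcal{A}$, with the preliminary step being that the defining property together with the countability of $E$ forces $\leq_P$ to be well-founded. In the successor case, where $A$ has immediate predecessor $A^-$, the union $A^- \cup A$ is a lower contour set in $B_{A^-}$ (verify using condition~2 for elements of $A$), and since it strictly contains $A^-$ the ``largest'' property of $A^-$ gives $\xi(A^- \cup A) > \xi(A^-)$; the weighted-average property of $\nu$ on disjoint unions then yields $\xi(A) > \xi(A^-)$, which combined with the inductive hypothesis propagates strict increase through $A$. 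The limit case---$A$ with no immediate predecessor---is the technical heart of the proof: for each $A_1 <_P A$, one applies the ``largest'' property at $A_1$ to a lower contour set in $B_{A_1}$ built from $A$ together with the intermediate layers $\bigcup_{A_1 <_P A'' <_P A} A''$, and then disentangles $\nu(A)$ from the convex combination over $\mathcal{A} \cap (A_1, A]_P$. The main difficulty is that the intermediate components may themselves dominate the weighted average, so deducing the strict inequality $\xi(A_1) < \xi(A)$ requires a careful use of the inductive hypothesis on the intermediate layers, which I expect is where the bulk of the work lies.
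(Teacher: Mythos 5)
Your forward direction (equilibrium partition implies the largest-minimizer property) and your verification of conditions 2 and 3 of Corollary~\ref{crl2} in the reverse direction are correct and essentially coincide with the paper's argument: the paper likewise decomposes a competing lower contour set $L$ in $E\setminus\bigcup_{A'<_P A}A'$ into the pieces $L\cap A''$ with $A''\geq_P A$, applies conditions 1--3 of Corollary~\ref{crl2} piecewise, and uses the fact that $\nu$ of a disjoint union is a convex combination of the $\nu$-values of the pieces (the paper splits this into a ``not a minimizer'' case and a ``not the largest minimizer'' case, but the content is the same as your unified treatment).

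The genuine gap is condition 1 of Corollary~\ref{crl2} in the reverse direction, which you present as a plan rather than a proof. Two concrete problems. First, your transfinite induction is predicated on $\leq_P$ being well-founded, which you assert ``is forced by the defining property together with the countability of $E$''; countability of $\mathcal{A}$ does not by itself exclude a dense or otherwise ill-founded total order, and you give no argument that the largest-minimizer property rules this out, so the induction is not even set up. Second, the limit case, which you yourself identify as the technical heart, is left undone, and that is exactly where the claim could fail: the inequality you need does not follow from the successor-step reasoning alone. The paper takes a different, non-inductive route: supposing strict monotonicity fails, it produces $A_1<_P A_2$ with $\xi(A_1)\geq\xi([A_1,A_2])$, where $[A_1,A_2]=\bigcup\{A'\in\mathcal{A}:A_1\leq_P A'\leq_P A_2\}$; since $[A_1,A_2]$ is a lower contour set in $E\setminus\bigcup_{A'<_P A_1}A'$ that is not contained in $A_1$, this contradicts $A_1$ being the \emph{largest} minimizer of (\ref{eq2.7}). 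The structural point you are missing is that the largest-minimizer property should be tested against order intervals $[A_1,A_2]$, which are themselves feasible in (\ref{eq2.7}) at $A_1$ and immediately yield $\xi([A_1,A_2])>\xi(A_1)$ for every $A_2>_P A_1$; passing from this to $\xi(A_2)>\xi(A_1)$ is then a weighted-average argument over the interval rather than an induction along $\leq_P$. To complete your write-up you must either justify well-foundedness and finish the limit case, or replace the induction with the interval argument.
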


Following Theorem \ref{thm3}, we can solve the equilibrium partition by recursively solving the largest lower contour set that minimizes \(\xi\). The induction process is as follows. First, solve the minimization problem
\begin{equation*}
	\min_{L}\xi(L) \quad s.t. \quad L~\text{is a lower contour set in}~E.
\end{equation*}
Lemma \ref{lmm.a5} in Appendix \ref{sec.a5} shows that if the above problem admits a solution, it has a largest minimizer. Denote by \(A_0\) this largest minimizer. Now suppose that for some \(n\in\mathbb{N}\), \(A_0,A_1,\dots,A_n\) are such that every \(A_k\) is the largest lower contour set in \(E\setminus\bigcup_{i<k}A_i\) that minimizes \(\xi\). We solve the minimization problem
\begin{equation*}
	\min_{L}\xi(L) \quad s.t. \quad L~\text{is a lower contour set in}~E\setminus\bigcup_{i\leq n} A_i.
\end{equation*}
If the above minimization problem admits a minimizer, let \(A_{n+1}\) be its largest minimizer, and we can proceed by induction.

By Theorem \ref{thm3}, if the equilibrium partition is finite, then the above induction process ends in finite steps, and the resulted partition \(\mathcal{A}=\{A_0,A_1,\dots,A_n\}\), along with the total order \(\leq_P\) such that \(A_i \leq_P A_j\Leftrightarrow i\leq j\), gives the equilibrium partition. Specifically, the above induction process solves the equilibrium partition when the evidence space is finite. If the equilibrium partition is not finite, but sets in the equilibrium partition \(\mathcal{A}\) can be indexed by \(\mathbb{N}\) such that \(A_i \leq_P A_j\Leftrightarrow i\leq j\), then every \(A_n\) is solved by the above induction process. Conversely, if every \(A_n\) in the above induction process is well defined, and \(\bigcup_{n\in\mathbb{N}}A_n=E\), then an equilibrium exists, and the equilibrium partition is given by \(\mathcal{A}=\{A_n\}_{n\in\mathbb{N}}\) and \(\leq_P\) such that \(A_i \leq_P A_j\Leftrightarrow i\leq j\). 

\citet{rappoport_2017} suggests an alternative method to compute the equilibrium partition by recursively solving the largest upper contour set that maximizes \(\xi\). A recent version of \citet{rappoport_2017} notes that the two methods yield the same outcome when the evidence space is finite. However, in many disclosure games with large evidence spaces, there does not exist an upper contour set that maximizes \(\xi\), so the alternative methods does not apply. For example, in the example in Section \ref{sec3.1}, an upper contour set in \(E\) has the form of \(\{e:e\geq n\}\) where \(n\in\mathbb{N}\). Since the larger the evidence, the more likely the state is good, \(\xi(\{e:e\geq n\})\) is strictly increasing in \(n\) for \(n\geq 1\) (as \(n\to\infty\), \(\xi(\{e:e\geq n\})\to 1\)). Hence, there does not exist an upper contour set that maximizes \(\xi\).

In general, when the evidence space is infinite, whether or not we can solve the equilibrium partition by recursively solving the largest lower contour set that minimizes \(\xi\) depends on the distributions of evidence \(F_1,F_2,\dots,F_N\), as well as the technology of disclosure \(\precsim\). The following theorem provides a sufficient condition for equilibrium existence which depends only on the technology of disclosure. Moreover, if this condition is satisfied, the equilibrium partition can be solved using the induction process above, as is shown in Appendix \ref{sec.a5}. We say a set \(A\subset E\) is \emph{connected} if for all \(e,e'\in A\), there exists a finite sequence \(e_0,e_1,\dots,e_K\) in \(A\) such that \(e_0=e\), \(e_N=e'\), and either \(e_k\precsim e_{k+1}\) or \(e_k\succsim e_{k+1}\) for all \(k=0,1,\dots,K-1\).

\begin{thm}[Existence] \label{thm4}
	An equilibrium exists if every connected set in \(E\) has finitely many minimal elements.
\end{thm}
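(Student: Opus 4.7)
The plan is to construct the unique ordered partition $(\mathcal{A},\leq_P)$ characterized by Corollary~\ref{crl2} via a transfinite recursion that extracts one piece at a time from a single connected component of the residual set. The feasibility of each extraction rests on a compactness lemma, which I would establish first: for any connected $C\subset E$ with finitely many minimal elements, the infimum of $\xi$ over non-empty lower contour sets in $C$ is attained. Since $\phi$ is strictly increasing, this reduces to minimizing $\nu$. I identify lower contour sets in $C$ with their indicator functions and argue that they form a closed, hence compact, subset of $\{0,1\}^C$ in the product topology; each defining condition $\chi_L(e)=1\Rightarrow\chi_L(e')=1$ with $e'\precsim e$ is closed. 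The map $L\mapsto\nu(L)$ is continuous at every non-empty LCS by dominated convergence, using the standing assumption that each $e$ has $F_G(e)+F_B(e)>0$. The finite-minimal-elements hypothesis then enables a pigeonhole step: any minimizing sequence of non-empty LCSs has a subsequence whose members share a common minimal element of $C$ (since each LCS contains at least one minimal element of $C$), which forces the limiting LCS to contain that element and hence to be non-empty and to achieve the infimum.

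Given the lemma, I would build $\mathcal{A}$ by transfinite recursion. At ordinal $\alpha$, set $S_\alpha=E\setminus\bigcup_{\beta<\alpha}A_\beta$; if $S_\alpha\neq\emptyset$, pick any connected component $C$ of $S_\alpha$ (which is a connected set in $E$ and thus inherits the finite-minimal-elements property) and apply the compactness lemma together with Lemma~\ref{lmm.a5} to define $A_\alpha$ as the largest lower contour set in $C$ minimizing $\xi$. Since $E$ is countable and the $A_\alpha$ are pairwise disjoint and non-empty, the recursion halts at a countable ordinal $\alpha^\star$ with $\bigcup_{\alpha<\alpha^\star}A_\alpha=E$. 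I would then order $\mathcal{A}$ by $\xi$-values, merging any pieces sharing the same $\xi$-value; such coincidences can only arise between pieces lying in distinct residual components (as argued below, ancestor--descendant pieces have strictly increasing $\xi$), so the pooled elements are pairwise $\precsim$-incomparable and merging is consistent with Corollary~\ref{crl2}.

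The verification of Corollary~\ref{crl2} rests on a tree-structure argument within each top-level connected component of $E$. The recursion generates a tree of nested sub-components; if $e\in A_\alpha$, $e'\in A_\beta$ with $e\precsim e'$ and $\alpha\leq\beta$, then $e'$ must still lie in $C_\alpha$ at stage $\alpha$ (being connected to $e$), so $A_\beta$ lies in a descendant sub-component of $A_\alpha$ in the tree. An induction along the path from $A_\alpha$ to $A_\beta$ shows that the union of $A_\alpha$ with all pieces extracted along the path forms a lower contour set in $A_\alpha$'s parent component, strictly containing $A_\alpha$; the largest-minimizer property of $A_\alpha$ then forces $\xi(A_\beta)>\xi(A_\alpha)$, giving both strict $\xi$-monotonicity (Condition 1) and order consistency with $\precsim$ (Condition 2). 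Condition 3 follows directly from the largest-minimizer construction for unmerged pieces, and for merged pieces from the weighted-average structure of $\nu$ on disjoint unions together with the condition applied to each constituent piece.

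The main obstacle is the compactness lemma of the first paragraph, which is where the finite-minimal-elements hypothesis is essential: it is precisely what prevents a minimizing sequence of non-empty LCSs from converging to the empty set (on which $\nu$ is undefined). Without the hypothesis---for instance, in the antichain $E=\{e_1,e_2,\ldots\}$ with $\nu(e_n)=1/n$---the infimum need not be attained globally and the unmodified induction of Theorem~\ref{thm3} breaks down. The theorem works precisely because the hypothesis allows us to localize each extraction step to a single connected component, where compactness delivers a minimizer.
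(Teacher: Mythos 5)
Your proposal is correct, and its overall architecture coincides with the paper's: both proofs greedily extract, from a connected component of the current residual, the largest lower contour set minimizing $\xi$ (the paper via Zorn's lemma on a family of partial partitions satisfying invariants (a)--(c), you via an explicit transfinite recursion), both then merge pieces sharing a $\xi$-value into a single ordered partition, and both conclude by verifying the characterization in Corollary~\ref{crl2}/Theorem~\ref{thm3}. The one genuinely different ingredient is your proof of the attainment step (the paper's Lemma~\ref{lmm.a6}): the paper fixes a minimal element, takes a minimizing sequence $\{L_n\}$ and passes to $\bigcap_n\bigcup_{m\geq n}L_m$, invoking the inequality $\nu(\bigcup_{m\geq n}L_m)\leq\nu(L_n)$; you instead compactify the family of lower contour sets as a closed subset of $\{0,1\}^C$, get continuity of $\nu$ away from the empty set by dominated convergence, and use the pigeonhole on the finitely many minimal elements to keep a subsequential limit nonempty. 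Your route is arguably the more robust one: it needs no comparison between $\nu$ of a union and $\nu$ of its constituents (delicate, since $\nu$ of a union of overlapping sets is not a weighted average of their $\nu$-values), only the fact that every nonempty lower contour set contains a minimal element. Your verification of Conditions 1--3 is a sketch --- the reduction of a cross-piece relation $e\precsim e'$ to an ancestor--descendant chain in the component tree, and the strict increase of $\xi$ along that chain via the largest-minimizer and weighted-average arguments, would need to be written out --- but it is at roughly the level of detail of the paper's own closing verification, and the omitted steps do go through.
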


The sufficient condition in Theorem \ref{thm4} is quite general and is satisfied by many evidence structures. Here are some examples with countable evidence spaces:

\begin{enumerate}
	\item Linear evidence \((\mathbb{N},\leq)\) in Section \ref{sec3.1}.
	\item Multidimensional evidence \((\mathbb{N}^d,\leq)\). The sender is endowed with \(d\) types of evidence, so her evidence is a vector \((e_1,e_2,\dots,e_d)\in\mathbb{N}^d\), where \(e_i\) denotes the amount of type \(i\) evidence she has. The sender can omit evidence of any type. Hence, \((m_1,m_2,\dots,m_d)\leq (e_1,e_2,\dots,e_d)\) if \(m_i\leq e_i\) for all \(i\).
	\item Lexicographical evidence \((\mathbb{N}^2,\leq^{lex})\). The sender is endowed with primary and secondary evidence, so her evidence endowment is a vector \((p,s)\in\mathbb{N}^2\), where \(p\) and \(s\) are the amounts of primary and secondary evidence she has, respectively. The sender can disclose all primary evidence and a part of her secondary evidence. Alternatively, she can disclose less primary evidence (e.g., she can make a weaker argument). In this case, secondary evidence is not material, and she can claim to have any amount of secondary evidence. Hence, \((m_p,m_s)\leq^{lex}(p,s)\) if \(m_p=p\) and \(m_s\leq s\), or if \(m_p<p\).
	\item \citet{dye_1985} evidence structure \((E,\precsim)\). The evidence space \(E=\{e_0,e_1,\dots\}\) is a finite or countable set. The sender is either informed with some evidence \(e_n\) (\(n>1\)), or is uninformed (in which case we denote her evidence by \(e_0\)). If the sender is informed, she can either truthfully disclose or pretend to be uninformed; if the sender is uninformed, she can only disclose \(e_0\). That is, \(m\precsim e\) if \(m=e\) or \(m=e_0\).
	\item Sequential evidence. The sender's evidence is a finite sequence of signals \((s_1,s_2,\dots,s_T)\in\cup_{t=1}^\infty S^t\), where \(S\) is a finite set of signals. The sender can truncate the sequence and disclose \((s_1,s_2,\dots,s_k)\) for some \(k\leq T\). Section \ref{sec4} studies a model with a similar evidence structure.
\end{enumerate}

One evidence structure that violates the sufficient condition in Theorem \ref{thm4} is as follows. Let \((\mathbb{N},\leq)\) be the standard linear evidence in Section \ref{sec3.1}, but now in addition to observing an evidence \(e\in\mathbb{N}\), the sender has a private type, which can be either strategic (\(s\)) or honest (\(h\)). The type is independent of the state or the evidence endowment. The sender can disclose any \(m\leq e\) if she is strategic, but she can only disclose truthfully if she is honest, i.e., \(m=e\). We can model this game as a disclosure game where the evidence space \(E=\mathbb{N}\times\{s,h\}\), and the disclosure technology is as follows: \((e',t')\precsim(e,t)\) if \(e'=e\) and \(t'=t=h\), or if \(e'\leq e\) and \(t=s\).\footnote{Although the sender does not disclose her private type, we can study this game using the disclosure game framework, since disclosing \(t'=h\) is weakly dominant.} However, the evidence structure \((E,\precsim)\) does not satisfy the condition in Theorem \ref{thm4}. For example, \(E\) itself is a connected set, but the set of minimal elements in \(E\) is \(\mathbb{N}\times\{h\}\), which is not finite.\footnote{Nevertheless, if the evidence is distributed according to Table \ref{table1} and independent of the sender's type, it is easy to verify that an equilibrium exists, in which the sender discloses \((0,h)\) if \((e,t)=(0,s)\) or \((e,t)=(1,s)\), and discloses truthfully otherwise.}


\section{Disclosure of Trading Record} \label{sec4}

We model the interaction between the hedge fund manager and the investor as a disclosure game with left censoring technology of disclosure. The hedge fund manager is the sender, and the investor is the receiver. The state of the world is binary and corresponds to the hedge fund's quality. With probability \(\pi_1\in(0,1)\), the state is bad (\(\omega_1=0\)); with probability \(\pi_2=1-\pi_1\), the state is good (\(\omega_2=1\)).

\subsection{Evidence}

The evidence space \(E=\bigcup_{t=0}^\infty\{g,b\}^t\). A piece of evidence \(e\in E\) is a sequence of binary signals. A typical element of length \(L>0\) is denoted by \(e=(s_1,s_2,\dots,s_L)\), where each \(s_t\in\{g,b\}\). This represents a trading record of the incubation trial, and each signal \(s_t\) represents the return in the \(t\)-th week in the incubation period--\(s_t=g\) stands for high return, and \(s_t=b\) stands for low return.

There is a ``null'' evidence (i.e., the evidence of length zero), denoted \(\emptyset\). For every \(e\in E\), let \(L(e)\) denote the length of \(e\), \(G(e)\) the number of \(g\)'s in \(e\), \(B(e)\) the number of \(b\)'s in \(e\), and \(D(e)=G(h)-B(h)\) the difference between the number of \(g\)'s and \(b\)'s. Moreover, for every \(e\) of length \(L>0\) and \(0<k\leq L\), let \(e|_k=(s_{L-k+1},s_{L-k+2},\dots,s_L)\) denote a \emph{left truncation} of \(e\). As a convention, \(e|_0=\emptyset\) for all \(e\in E\).

Let the distribution of evidence under the bad state \(F_1\) and that under the good state \(F_2\) be as follows:
\begin{align*}
	F_1(e) &= (1-p-q)p^{B(e)}q^{G(e)}, \\
	F_2(e) &= (1-p-q)p^{G(e)}q^{B(e)},
\end{align*}
where \(p,q\) are parameters satisfying \(1>p>q>0\) and \(1>p+q>0\). An implication of the above specifications is that the length of evidence is distributed independent of the realized state. Given the length of evidence, each signal is an independent draw from a Bernoulli distribution. With probability \(\frac{p}{p+q}\), the signal indicates the realized state; with the complement probability \(\frac{q}{p+q}\), it indicates the alternative state. The informativeness of each signal is constant over time and symmetric across states.

\subsection{Left censoring technology of disclosure}

Given her evidence endowment, the sender can disclose a set of most recent signals and conceal earlier signals, i.e., she can report a truncation of her evidence endowment. This defines the \emph{left censoring technology of disclosure}. Formally, it is a partial order \(\precsim\) on \(E\) such that \(e_1\precsim e_2\Leftrightarrow e_1=e_2|_{L(e_1)}\).

\subsection{Payoffs}

The receiver has quadratic loss utility function \(u_R(a,\omega)=-(a-\omega)^2\). Given his posterior belief \(\mu=(\mu_1,\mu_2)\in\Delta(\{0,1\})\), the receiver's optimal action equals his posterior belief on the good state, i.e., \(\phi(\mu)=\mu_2\). The sender's payoff \(u_S(a,\omega)=a\).

\subsection{Equilibrium features}

Applying results from Section \ref{sec3}, we can solve the sender's value function.

\begin{pro}[Equilibrium value function]\label{pro4}
	Let \(M:E\to\mathbb{N}\) be defined by
	\begin{equation*}
		M(e) = \max_{k\leq L(e)} D(e|_k).
	\end{equation*}
	An equilibrium exists, and the sender's value function is
	\begin{equation} \label{eq2.9}
		V(e)=\frac{1}{1+\frac{\pi_1}{\pi_2}\frac{p-\alpha}{q-\alpha}\left(\frac{q}{p}\right)^{M(e)+1}},
	\end{equation}
	where \(\alpha=\frac{1-\sqrt{1-4pq}}{2}\).
\end{pro}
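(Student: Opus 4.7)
The plan is to invoke Theorem~\ref{thm4} for equilibrium existence and then apply Theorem~\ref{thm1} to verify the formula for~$V$, using candidate level sets $A_n := \{e \in E : N(e) = n\}$. For Theorem~\ref{thm4}, the left-censoring order has the following structural property: whenever $e_1, e_2 \precsim c$, both elements are truncations of the single sequence~$c$ and hence comparable, because the truncations of a fixed sequence are linearly ordered by length. A short induction on a connecting $\precsim$-chain then shows that any minimal element $e_0$ of a connected $S \subset E$ satisfies $e_0 \precsim e$ for every $e \in S$: in the step where $c_{i+1} \precsim c_i$, both $e_0$ and $c_{i+1}$ are truncations of $c_i$ and hence comparable, and $c_{i+1} \prec e_0$ would contradict minimality of $e_0$. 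Thus every connected set has a unique minimal element, and Theorem~\ref{thm4} yields existence.

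Condition~1 of Theorem~\ref{thm1} ($V$ non-decreasing in $\precsim$) is immediate, since left-truncation only removes candidates from the max defining~$N$. For Condition~2, I view each $e$ under $F_\omega$ as a weighted random walk with $+1$-weight~$p$ and $-1$-weight~$q$ (swapped under $\omega=B$) plus stopping weight $1-p-q$ per step, so $N(e)$ equals the maximum position of this walk. Letting $u_\omega$ denote the total weight of walks from~$0$ reaching~$+1$ before stopping, the fixed point $u_\omega = p_\omega + q_\omega u_\omega^2$ yields $u_G = \alpha/q$ and $u_B = \alpha/p$ with $\alpha = (1-\sqrt{1-4pq})/2$. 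Hence $F_\omega(A_n) = u_\omega^n(1-u_\omega)$, and substituting into Bayes' rule for $\nu(A_n)$ delivers exactly the asserted expression for $V(e)$ on $e \in A_n$.

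The main obstacle is Condition~3: $\xi(L) \geq \xi(A_n)$ for every lower contour set $L \subset A_n$. Decomposing each $e \in A_n$ uniquely as a first-passage-to-$n$ followed (in the reversed-walk direction) by a tail staying $\leq n$, a lower contour set takes the form $L = \bigsqcup_{f \in F_L}\{f\}\times T_f$, where each $T_f$ is a prefix-closed subset of $\mathcal{T}$, the walks from~$0$ staying $\leq 0$. Because the per-first-passage likelihood ratio equals $(p/q)^n$ independent of~$f$, the slack reduces to the following \emph{tail lemma}: for every non-empty prefix-closed $T \subset \mathcal{T}$, $F_G(T)/F_B(T) \geq F_G(\mathcal{T})/F_B(\mathcal{T})$.

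To prove the tail lemma, set $R_0 := F_G(\mathcal{T})/F_B(\mathcal{T}) = (1-u_G)/(1-u_B)$ and $\delta(w) := F_G(w) - R_0 F_B(w)$, so $\delta(\mathcal{T}) = 0$. For any $s \in \mathcal{T}$ ending at position $v \leq 0$, the cone $E_s := \{w \in \mathcal{T} : s \text{ is a prefix of } w\}$ factors as (weight of~$s$ under~$\omega$) times $1 - u_\omega^{1-v}$, and the inequality $\delta(E_s) \leq 0$ reduces (using $u_G/u_B = p/q$) to $\sum_{i=0}^{k-1}\bigl[(p/q)^i - (p/q)^{k-1}\bigr] u_B^i \leq 0$ with $k = 1 - v \geq 1$, each summand being non-positive since $p > q$. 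Because $\mathcal{T} \setminus T$ is extension-closed in $\mathcal{T}$, it decomposes as a disjoint union of cones $E_s$ over its (necessarily non-empty) minimal elements, so $\delta(\mathcal{T} \setminus T) \leq 0$ and therefore $\delta(T) \geq 0$. This establishes Condition~3, and Theorem~\ref{thm1} then gives the stated formula for~$V$.
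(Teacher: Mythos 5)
Your proposal is correct and follows the same overall architecture as the paper's proof: verify the three conditions of Theorem \ref{thm1} for the candidate $V$ whose level sets are $E_n=\{e:N(e)=n\}$, compute $F_\omega(E_n)$ (your first-passage fixed point $u_\omega=p_\omega+q_\omega u_\omega^2$ gives the same $F_G(E_n)=\frac{q-\alpha}{q}(\alpha/q)^n$, $F_B(E_n)=\frac{p-\alpha}{p}(\alpha/p)^n$ that the paper derives from the recursive set identities $E_n=\{(g,h):h\in E_{n-1}\}\cup\{(b,h):h\in E_{n+1}\}$), and use the unique decomposition of $e\in E_n$ into a minimal first-passage piece and an $E_0$-tail to reduce condition 3 to the statement $\nu(L_0)\geq\nu(E_0)$ for every lower contour set $L_0$ in $E_0$. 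Where you genuinely diverge is in proving that reduced statement. The paper argues by contradiction: it fixes $\varepsilon>0$, uses Zorn's lemma to extract a maximal lower contour set with $\nu\leq\nu(E_0)-\varepsilon$, and splits on whether some minimal element $e^\star$ of the complement has $D(e^\star)<0$ (in which case adjoining the cone over $e^\star$ preserves the deficit, contradicting maximality) or all have $D(e^\star)=0$ (in which case the complement has $\nu$ exactly $\nu(E_0)$, forcing $\nu(L_0)=\nu(E_0)$). You instead prove the inequality directly: decompose the complement $E_0\setminus L_0$ into disjoint cones over its minimal elements and show each cone has non-positive excess $\delta(E_s)=F_G(E_s)-R_0F_B(E_s)$ via the elementary term-by-term bound $\sum_{i=0}^{k-1}\bigl[(p/q)^i-(p/q)^{k-1}\bigr]u_B^i\leq 0$, then sum. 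This is cleaner and constructive — it dispenses with the $\varepsilon$ and the maximality argument entirely and makes explicit why cones rooted at $D=0$ contribute zero excess (which is the fact the paper's case (ii) relies on). You also explicitly verify the hypothesis of Theorem \ref{thm4} (every connected set has a unique minimal element under left censoring), which the paper asserts in the main text without proof; strictly speaking this is redundant once condition 3 is verified, since the sufficiency direction of Theorem \ref{thm1} already delivers existence, but it is correct and harmless.
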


The proposition states that in any equilibrium, given trading record \(e\), the sender's equilibrium value is increasing in \(M(e)\), which is the maximal difference between the number of \(g\)'s and the number of \(b\)'s among all left truncations of \(e\). Sender types with different evidence endowment but the same value of maximal difference receive the same payoff in any equilibrium. For example, if the sender's evidence is \((g,b)\) or \((b,g,b)\), her equilibrium payoff is the same as having the null evidence \(\emptyset\).

When \(q\to0\), any signal is fully revealing of the state. The sender receives the null evidence with probability \(1-p\), receives only good signals with probability \(\pi_2 p\), and receives only bad signals with probability \(\pi_1p\). In the unique equilibrium, the sender discloses truthfully if she has any good signal and discloses the null evidence otherwise. In the first case, the receiver knows that the state is good and chooses action 1. This corresponds to the fact that \(V(e)\to 1\) as \(q\to0\) if \(M(e)>0\). In the latter case, the receiver's posterior belief on the good state after seeing the null evidence is \(\frac{\pi_2(1-p)}{1-\pi_2 p}\). This is also the limit of the sender's payoff \(V(e)\) as \(q\to0\) if \(M(e)=0\).

We now characterize the sender's equilibrium strategy. A seemingly optimal strategy for the sender with evidence endowment \(e\) is to report a maximal difference truncation \(e|_k\) such that \(D(e|_k)=M(e)\). Disclosing any other message \(m\) appears to be sub-optimal, since \(m\) is worse than \(e|_k\) if interpreted at face value (i.e., \(\xi(m)<\xi(e|_k)\)). However, this is not necessarily the case. In fact, the sender sometimes reports longer messages which include redundant and seemingly unfavorable information at the beginning of the messages. The receiver, on the other hand, is not influenced by the unfavorable information in on-path messages and acts the same as if only \(e|_k\) is reported.

Notice that for all \(e\in E\), \(\xi(e)\geq V(e)\) if and only if \(D(e)\geq M(e)-n^\star\), where
\begin{equation} \label{eq2.10}
	n^\star=\log_{p/q}\left(\frac{p-\alpha}{q-\alpha}\right)-1>0
\end{equation}
is a constant determined solely by the model parameters \(p\) and \(q\). Hence, in any equilibrium, evidence \(e\) is disclosed truthfully so long as \(D(e)\geq M(e)-n^\star\). This demonstrates our first point. That is, seemingly unfavorable information may be reported on-path. Specifically, if
\begin{equation} \label{eq2.11}
	(p+q)(1-p-q)\leq pq,
\end{equation}
then \(n^\star\geq1\), and in any equilibrium, some seemingly unfavorable information is disclosed. For instance, consider evidence \(e=(b)\) and observe that \(D((b))=-1\) and \(M((b))=0\). Therefore, if the parametric assumption (\ref{eq2.11}) is satisfied, evidence \((b)\) is disclosed truthfully in any equilibrium of the game, even though reporting the null evidence is feasible and seemingly better, as in terms of face value, \(\xi(\emptyset)>\xi((b))\). Similarly, for all evidence \(e\) such that \(e=(b,e')\), where \(e'\) is a maximal difference truncation of \(e\), the sender truthfully discloses \(e\) in equilibrium if (\ref{eq2.11}) is satisfied, even though \(e\) contains an initial bad signal. This result may appear similar to the result in \citet{dziuda_2011}, but the strategic reasoning underlying these results are different. In \citet{dziuda_2011}, the sender discloses unfavorable signals in order to be perceived as being honest. In the current paper, if all sender types whose evidence endowment has a same maximal difference truncation pool on disclosing that maximal difference truncation, the persuasiveness of that truncation is so undermined that some sender types will find it profitable to disclose truthfully. Thus instead, the sender utilizes several messages in equilibrium, including those seemingly sub-optimal ones, such that all on-path messages have the same persuasiveness and yield a strictly higher payoff than deviating to other feasible messages.

Secondly, when there are multiple maximal difference truncations, the sender's message always contains the longest maximal difference truncation. That is, in any equilibrium, \(L(m)\geq\max(\argmax_k D(e|_k))\) for all \(e\in E\) and \(m\in supp(\sigma^\star(\cdot|e))\). For instance, the sender with the trading record \((b,b,b,g,b)\) will always include the two most recent signals \((g,b)\) in her message, although both truncations \((g,b)\) and \(\emptyset\) are maximal difference truncations. This result follows from the fact that \(\xi(E_0^-)=\xi(E_0)\), where \(E_0^-\) is the set containing the null evidence and all evidence such that every nonempty truncation has strictly more \(b\)'s than \(g\)'s, and \(E_0\) is the set of evidence such that \(M(e)=0\). Notice that \(E_0^-\) is a lower contour set in \(E_0\). Therefore, no sender type \(e\in E_0\setminus E_0^-\) can report a message \(m\in E_0^-\) with positive probability in any equilibrium: after accommodating all reports from sender types in \(E_0^-\), the receiver's sequentially rational action after observing an on-path message in \(E_0^-\) is already \(\xi(E_0)\); any further pooling from sender types outside the set \(E_0^-\) will make the receiver's action too low, and disclosing truthfully will be a profitable deviation for some sender types. Specifically, since \((b,b,b,g,b)\in E_0\setminus E_0^-\), the sender with evidence endowment \((b,b,b,g,b)\) will disclose at least the two most recent signals \((g,b)\) in her message.

Lastly, since all evidence \(e\) with the same value of \(M(e)\) yield the same sender's equilibrium payoff, the receiver's action is not affected by the unfavorable information revealed by the sender in equilibrium. Unless the message contains so many bad signals that it is not an on-path message, the receiver will base his action solely on the favorable information in the message, i.e., his posterior depends on \(M(m)\) instead of \(D(m)\).

The proposition below summarizes the results.

\begin{pro}[Equilibrium strategies] \label{pro5}
	In any equilibrium \((\sigma^\star,\mathbf{a}^\star,\mu)\), for all \(e\in E\):
	\begin{enumerate}
		\item \(\mathbf{a}^\star(e)=\mu_2(e)=\min\{V(e),\xi(e)\}\), where \(V\) is defined in (\ref{eq2.9});
		\item \(\sigma^\star(e|e)=1\) if and only if \(D(e)\geq M(e)-n^\star\), where \(n^\star>0\) is defined in (\ref{eq2.10});
		\item If \(\sigma^\star(m|e)>0\), then \(L(m)\geq\max(\argmax_k D(e|_k))\).
	\end{enumerate}
\end{pro}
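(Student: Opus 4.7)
\textbf{Parts 1 and 2} follow from Theorem \ref{thm1} and a direct calculation. Since $\phi$ is the identity, $\mathbf{a}^\star(e)=\mu(e)$; the construction underlying the ``if'' direction of Theorem \ref{thm1} assigns $\mu(e)=V(e)$ whenever $V(e)\leq\nu(e)$ (so $e$ is disclosed truthfully and on-path) and $\mu(e)=\nu(e)$ whenever $V(e)>\nu(e)$ (so $e$ is off-path), giving $\mathbf{a}^\star(e)=\min\{V(e),\nu(e)\}$. For Part 2, truth-leaning (\ref{eq2.4}) forces $\sigma^\star(e|e)=1$ iff $V(e)\leq\nu(e)$. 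Substituting the closed forms
\[
\nu(e)=\frac{1}{1+\tfrac{1-\pi_0}{\pi_0}(q/p)^{D(e)}},\qquad V(e)=\frac{1}{1+\tfrac{1-\pi_0}{\pi_0}\tfrac{p-\alpha}{q-\alpha}(q/p)^{N(e)+1}},
\]
the inequality rearranges to $(q/p)^{D(e)-N(e)-1}\leq(p-\alpha)/(q-\alpha)$; since $q/p<1$, taking $\log_{p/q}$ reverses it to $D(e)-N(e)-1\geq-(n^\star+1)$, i.e., $D(e)\geq N(e)-n^\star$.

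\textbf{Part 3} is the substantive claim. I would proceed by contradiction, generalizing the informal $E_0/E_0^-$ argument given in the text just before Proposition \ref{pro5}. Suppose $\sigma^\star(m|e)>0$ with $L(m)<k^\star:=\max(\argmax_k D(e|_k))$. Payoff uniqueness (Theorem \ref{thm1}) gives $V(m)=V(e)=V_n$ with $n:=N(e)$, hence $N(m)=n$, and both $m$ and $m^\star:=e|_{k^\star}$ lie in the partition cell $A_n:=V^{-1}(V_n)$ from Corollary \ref{crl2}. Define the subset
\[
A_n^-:=\bigl\{e'\in A_n:\argmax\nolimits_{k\in\{0,\ldots,L(e')\}}D(e'|_k)\ \text{is a singleton}\bigr\},
\]
which for $n=0$ specializes to the set $E_0^-$ of the text. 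A short check shows that $A_n^-$ is a lower contour set in $A_n$: if $e'\in A_n^-$ has its unique argmax at $k^-(e')$, any truncation $e'|_j\in A_n$ must satisfy $j\geq k^-(e')$ (else $N(e'|_j)<n$), and the unique argmax of $e'|_j$ is still at $k^-(e')$.

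The main technical step is the identity $\nu(A_n^-)=V_n$. Writing $A_n^-=\bigsqcup_{r\in R_n}T(r)$ with $R_n:=\{r:D(r)=n,\ \argmax_k D(r|_k)=\{L(r)\}\}$ and $T(r):=\{wr:w\in E_0^-\}$ (the prefix extensions of $r$ by an ``$E_0^-$-type'' walk), the multiplicative structure $F_G(wr)/F_B(wr)=(p/q)^{D(w)+n}$ reduces the identity to the base case $\nu(E_0^-)=V_0$. The latter is a random-walk/hitting-time computation: conditioning on the rightmost signal and using that the PGF of the hitting time of $0$ from $-1$ for the $\pm1$ walk with up-probability $p/(p+q)$ evaluates at the geometric-stopping parameter $s=p+q$ to $\alpha/q$ yields $F_G(E_0^-)=1-p-\alpha$; symmetrically $F_B(E_0^-)=1-q-\alpha$. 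Using $\alpha(1-\alpha)=pq$ a short algebraic manipulation gives $F_G(E_0^-)/F_B(E_0^-)=p(q-\alpha)/(q(p-\alpha))$, which is exactly the likelihood ratio delivering $\nu(E_0^-)=V_0$. This generating-function identity is the hard part of the proof.

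Given $\nu(A_n^-)=V_n$, condition 3 of Corollary \ref{crl2} binds at $L=A_n^-$. In any equilibrium, $A_n^-$-senders can only disclose feasible messages that themselves lie in $A_n^-$ (the truncation operation preserves uniqueness of the argmax), and the aggregate mass of $A_n^-$-senders on on-path messages in $A_n^-$ already produces likelihood ratio $V_n$; any additional mass routed from $A_n\setminus A_n^-$ to an on-path message in $A_n^-$ would force some individual Bayes posterior in $A_n^-$ off $V_n$, contradicting $\mu(\cdot)\equiv V_n$ on on-path messages in $A_n$. Because $e$ has at least two distinct argmax positions (at $k^-(e)$ and at $k^\star$), $e\in A_n\setminus A_n^-$; and in the generic case where $m\in A_n^-$, the contradiction is immediate. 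When $m\notin A_n^-$, the same argument is iterated along the nested chain of lower contour sets $A_n^-\subsetneq A_n^{(1)}\subsetneq\cdots\subsetneq A_n$, indexed by the position of the top argmax and each obtained by an analogous prefix-extension decomposition that inherits the tight identity $\nu(\cdot)=V_n$ from the base case, until the candidate $m$ is trapped inside a strictly binding lower contour set. In every case, $\sigma^\star(m|e)=0$, contradicting the assumption and yielding $L(m)\geq k^\star$.
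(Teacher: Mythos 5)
Your proposal is correct, and for Parts 1 and 2 it matches the paper, which simply points to the case analysis at the start of the proof of Lemma \ref{lmm.a3} and to the calculation in the main text. For Part 3 the underlying mechanism is the same as the paper's---exhibit a nested chain of lower contour sets inside \(E_n\) on which \(\nu\) equals \(V_n\) exactly, observe that senders inside each such set can only report inside it, and conclude via Bayes' rule plus truth-leaning that no mass can be routed in from outside---but your packaging differs in two respects. First, the paper's chain is specific to the deviating endowment \(\hat{e}\): it sets \(K_i=\{(e_0^-,\hat{e}|_{k_i}):e_0^-\in E_0^-\}\) for the argmax positions \(k_1<\dots<k_l\) of \(\hat{e}\) and inducts on \(i\), whereas your sets \(A_n^{(i)}\) stratify all of \(E_n\) by the number of argmax positions; both work, and yours delivers the conclusion for all \(e\in E_n\) at once rather than one endowment at a time. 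Second, you supply the explicit hitting-time computation \(F_G(E_0^-)=1-p-\alpha\), \(F_B(E_0^-)=1-q-\alpha\) (which I have checked, including the reduction \(F_G(E_0^-)/F_B(E_0^-)=p(q-\alpha)/(q(p-\alpha))\) via \(\alpha(1-\alpha)=pq\)), hence \(\nu(E_0^-)=V_0\); the paper asserts the corresponding identity \(\nu(K_j)=V(\hat{e})\) without derivation---it is recoverable from the fiber computations in the proof of Proposition \ref{pro4} but is never written out---so your PGF argument actually makes explicit the one step the paper leaves implicit. Two cosmetic points: the chain is really indexed by the \emph{number} of argmax positions rather than ``the position of the top argmax,'' and the claim that \(e\) has two distinct argmax positions deserves the one-line justification that \(N(m)=n\) with \(L(m)<k^\star\) forces an argmax of \(e\) at some \(k\leq L(m)<k^\star\).
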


\subsection{Comparative statics}

The following proposition shows how the sender's equilibrium value changes with respect to parameters \(p\) and \(q\). Notice that increasing, say \(p\), while holding \(q\) constant introduces two effects: on the one hand, it increases the informativeness of each signal; on the other hand, the expected length of the sender's evidence endowment increases. We introduce the following notation to disentangle the two effects. Let \(\gamma=\frac{p}{q}>1\) be the informativeness of each signal, and \(\kappa=\frac{1}{1-p-q}\) the expected length of the sender's evidence endowment. Given \(\gamma>1\) and \(\kappa>1\), \(p\) and \(q\) are determined.

\begin{pro}[Comparative statics] \label{pro7}
	The following are true:
	\begin{enumerate}
		\item \(V(e)\) is decreasing in \(\kappa\) for all \(e\in E\).
		\item If \(M(e)=0\), \(V(e)\) is decreasing in \(\gamma\). If \(M(e)>0\), there exists \(\hat{\gamma}>1\) such that \(V(e)\) is increasing in \(\gamma\) when \(\gamma>\hat{\gamma}\), and the threshold \(\hat{\gamma}\to 1\) as \(M(e)\to\infty\).
		\item Fixing \(e\in E\) and \(q\in(0,\frac{1}{2})\), there exists \(\hat{p}\in[q,1-q)\) such that \(V(e)\) is increasing in \(p\) if \(p<\hat{p}\) and decreasing in \(p\) if \(p>\hat{p}\), and the threshold \(\hat{p}\) is non-decreasing in \(M(e)\).
	\end{enumerate}
\end{pro}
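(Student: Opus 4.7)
The plan is to reparametrize the formula for $V(e)$ via $P:=p/\alpha$ and $Q:=q/\alpha$. Using $\alpha(1-\alpha)=pq$ together with $0<q<p$ and $p+q<1$, one verifies $\alpha<q<p$, so $P>Q>1$; the identity $(p-\alpha)(q-\alpha)=\alpha(1-p-q)$ further lets us rewrite
\[
V(e)=\frac{1}{1+\lambda R},\qquad R=\frac{P-1}{Q-1}\Big(\frac{Q}{P}\Big)^{N+1},
\]
with $\lambda=(1-\pi_0)/\pi_0$ and $N=N(e)$. In these coordinates $\gamma=P/Q$ and $\kappa=(PQ+1)/[(P-1)(Q-1)]$. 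Since $V$ is strictly decreasing in $R$, it suffices to establish the reversed monotonicity for $R$ in each part.

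For part~1, fixing $\gamma$ gives $P=\gamma Q$, so $R=\gamma^{-(N+1)}(\gamma Q-1)/(Q-1)$ has $Q$-derivative proportional to $1-\gamma<0$, while the numerator of $d\kappa/dQ$ simplifies to $(\gamma+1)(1-\gamma Q^2)<0$ (using $\gamma Q^2=PQ>1$). Composing, $R$ is increasing in $\kappa$ and $V$ is decreasing in $\kappa$.

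For part~2, fix $s=p+q$. Implicit differentiation of $P+Q=s(PQ+1)$ with $P=\gamma Q$ yields $\dot P:=dP/d\gamma>0$, $\dot Q:=dQ/d\gamma<0$, together with the clean identity $\dot P/P-\dot Q/Q=1/\gamma$. With $f(X):=\log(X-1)-(N+1)\log X$, algebra (using $1/[X(X-1)]=1/(X-1)-1/X$) gives
\[
\frac{d\log R}{d\gamma}=f'(P)\dot P-f'(Q)\dot Q=-\frac{N}{\gamma}+\underbrace{\Big[\frac{\dot P}{P(P-1)}-\frac{\dot Q}{Q(Q-1)}\Big]}_{C(\gamma)>0}.
\]
At $N=0$ this is strictly positive, so $V$ is strictly decreasing in $\gamma$, giving part~2a. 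For $N\geq 1$, the asymptotics $Q\to 1/s$ and $P\sim\gamma/s$ as $\gamma\to\infty$ (from the quadratic in $Q$) force $C(\gamma)=O(1/\gamma^2)$; continuity of $\gamma C(\gamma)$ at $\gamma=1^+$ (where $\dot P=-\dot Q=P/2$ and $C(1)=1/(P(1)-1)$ is finite) together with this decay give $M:=\sup_{\gamma>1}\gamma C(\gamma)<\infty$. Defining $\hat\gamma:=\max(1,\sup\{\gamma>1:\gamma C(\gamma)\geq N\})$, we have $d\log R/d\gamma<0$ for $\gamma>\hat\gamma$, so $V$ is strictly increasing there; and $\hat\gamma=1$ whenever $N>M$, yielding $\hat\gamma\to 1$ as $N\to\infty$.

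For part~3, fix $q$ and substitute $y:=\alpha\in(y_0,q)$, where $y_0:=(1-\sqrt{1-4q^2})/2$ corresponds to $p=q$; since $p=y(1-y)/q$, $p$ is strictly increasing in $y$. A direct computation yields
\[
\log R=\log(1-y-q)-\log(q-y)-N\log y-(N+1)\log(1-y)+(2N+1)\log q,
\]
and combining fractions,
\[
h(y):=\frac{d\log R}{dy}=\frac{1-2q}{(q-y)(1-y-q)}+\frac{(2N+1)y-N}{y(1-y)}.
\]
Two properties close the proof. First, $h'(y)>0$ on $(y_0,q)$: the derivative of the first summand is $(1-2q)(1-2y)/[(q-y)(1-y-q)]^2>0$, and the derivative of the second is $[(2N+1)y^2-2Ny+N]/[y(1-y)]^2>0$ (the quadratic has discriminant $-4N(N+1)\leq 0$, so is everywhere positive). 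Thus $h$ has at most one zero, and $V$ is single-peaked in $p$. Second, $\partial h/\partial N=(2y-1)/[y(1-y)]<0$ (since $y<q<1/2$), so by the implicit function theorem the zero of $h$ shifts right as $N$ grows, hence $\hat p$ is non-decreasing in $N$. Boundary analysis finishes the threshold structure: $h\to+\infty$ as $y\to q^-$ forces $\hat p<1-q$, and at $N=0$ the second summand reduces to $1/(1-y)>0$, so $h>0$ throughout and $\hat p=q$. The crux of the proof --- and the main obstacle --- is identifying the substitution $y=\alpha$, which is what exposes the two transparently monotone summands of $h$ that the $(p,q)$ coordinates obscure.
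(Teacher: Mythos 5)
Your proof is correct, and it takes a genuinely different route from the paper's. The paper works directly in the \((\gamma,\kappa)\) coordinates, carrying the radical \(\sqrt{\Delta}\) through every computation and establishing each claim by explicit bounding: part 2 reduces to the sign of the cubic \(-N\gamma^3+(N+3)\gamma^2+N\gamma-(N+1)\) after sandwiching \(\sqrt{\Delta}\) between \(\frac{\gamma-1}{\gamma+1}\kappa\) and \(\kappa\), and part 3 to showing a bracket in \(\partial_p\) is increasing in \(p\) and decreasing in \(N\). Your substitution \(P=p/\alpha\), \(Q=q/\alpha\) (and \(y=\alpha\) in part 3) eliminates the radical entirely via \(\alpha(1-\alpha)=pq\) and \((p-\alpha)(q-\alpha)=\alpha(1-p-q)\), so every step becomes calculus on rational functions; I checked the key identities (\(PQ+1=1/\alpha\), hence \(\kappa=(PQ+1)/[(P-1)(Q-1)]\); the numerator \((\gamma+1)(1-\gamma Q^2)\); the signs of \(\dot P,\dot Q\); the formula for \(h\) and the discriminant \(-4N(N+1)\)) and they all hold. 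The decomposition \(d\log R/d\gamma=C(\gamma)-N/\gamma\) with \(C>0\), and the single-crossing argument via \(h'>0\), are cleaner and more conceptual than the paper's inequality chains, and your part 3 makes the single-peakedness in \(p\) and the monotone shift of \(\hat p\) in \(N\) transparent. The one respect in which you prove less: your threshold \(\hat\gamma=\sup\{\gamma>1:\gamma C(\gamma)\ge N\}\) depends on \(\kappa\) (through \(C\)), whereas the paper's threshold \(\gamma_3(N)\) is uniform over all \(\kappa>1\), which is what the text preceding the proposition asserts (a threshold independent of \(\kappa\)). This gap is real: at \(\gamma=1\) one has \(C(1)=1/(P-1)=\alpha/(p-\alpha)\to\infty\) as \(\kappa\to\infty\), so \(\gamma C(\gamma)\) is not uniformly bounded over \((\gamma,\kappa)\) jointly, and extracting the \(\kappa\)-uniform version from your decomposition would need a sharper estimate on \(C(\gamma)\) for \(\gamma\) bounded away from \(1\). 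For the literal statement of the proposition (which fixes \(\kappa\) when varying \(\gamma\)) your argument suffices, but the discrepancy with the paper's stronger claim is worth flagging.
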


\begin{figure}[t]
\centering
\begin{subfigure}{.48\textwidth}
\subcaption{\(V(e)\) as a function of \(\kappa\) and \(\gamma\) (\(N(e)=2\))}
\includegraphics[width=\textwidth,trim={32pt 0 32pt 0},clip]{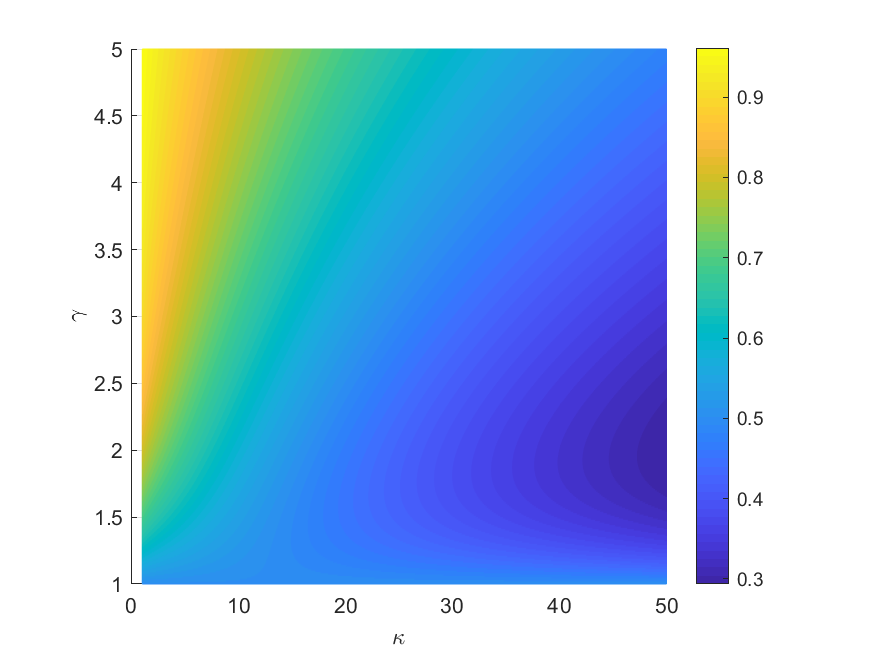}
\end{subfigure}
\begin{subfigure}{.48\textwidth}
\subcaption{The sender's values as functions of \(p\) (\(q=\frac{1}{5}\))}
\includegraphics[width=\textwidth,trim={32pt 0 32pt 0},clip]{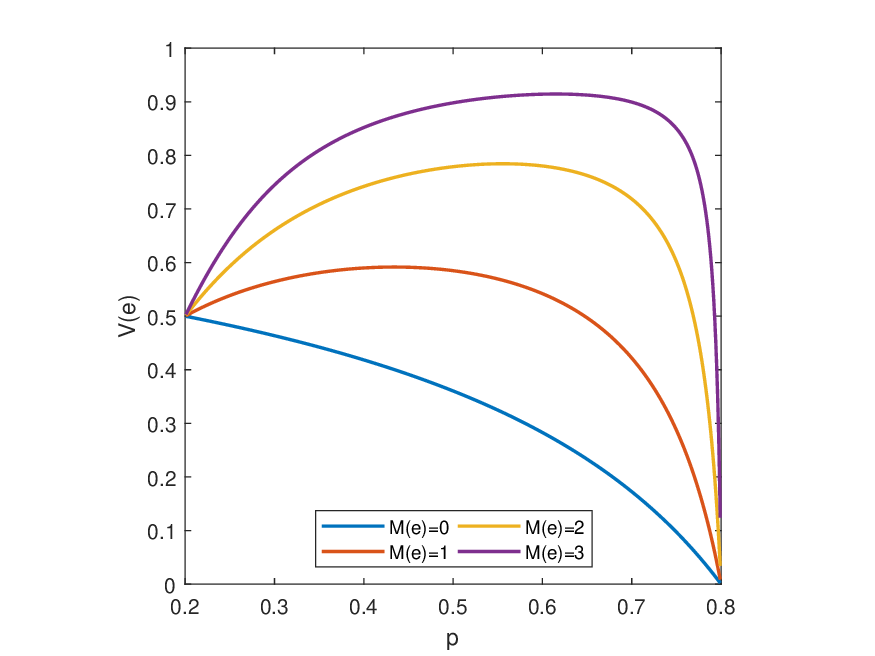}
\end{subfigure}
\caption{Comparative statics results (\(\pi_0=\frac{1}{2}\))}\label{fig3}
\end{figure}

The proposition shows different effects of increasing the informativeness of signals and increasing the length of evidence on the sender's equilibrium value. As is demonstrated in Figure \ref{fig3}(A) for \(M(e)=2\), increasing the expected length of the sender's evidence endowment always decreases the sender's equilibrium value, whereas the sender may benefit from increased informativeness of signals. When \(\kappa\) is small, the sender's equilibrium value is monotone increasing in \(\gamma\); when \(\kappa\) is large, the sender's equilibrium value first decreases and then increases as \(\gamma\) increases. The proposition shows that there exists a threshold \(\hat{\gamma}>1\), independent of \(\kappa\), such that the sender's value increases in \(\gamma\) for all \(\gamma>\hat{\gamma}\) and \(\kappa>1\). An equivalent interpretation of the result is that the effect of increasing the informativeness of signals depends on the sender's evidence endowment. Fixing \(\gamma>1\), there is a threshold \(\hat{n}\geq1\) such that the sender benefits from increasing the informativeness of signals if her evidence endowment is such that \(M(e)>\hat{n}\). That is, increasing the informativeness of signals benefits the sender if her evidence endowment is relatively favorable.

Combining the two competing effects, the effect of increasing \(p\) on the sender's equilibrium value depends on the sender's evidence endowment and is generally not monotonic. This is demonstrated in Figure \ref{fig3}(B). If the sender's evidence endowment is relatively unfavorable (specifically, if \(M(e)=0\)), increasing \(p\) monotonically decreases the sender's equilibrium payoff. If the sender's evidence endowment is relatively favorable, the sender's equilibrium value first increases and then decreases as \(p\) increases. When \(p\to1-q\), the expected length of the sender's evidence endowment tends to infinity. Therefore, regardless of what the sender discloses, the receiver believes that there are infinitely many signals being omitted, and his skepticism becomes so overwhelming that the sender receives close to zero payoff in equilibrium.


\section{Conclusion}\label{sec6}

There are many situations where communication relies on hard evidence. However, even when information is verifiable, an agent may withhold key information from others if their interests are not perfectly aligned.

We study a general framework of disclosure games. This general framework is compatible with large evidence spaces and a wide range of technologies of disclosure. We characterize the unique equilibrium value function of the sender and present two induction processes. The first one constructs an equilibrium from the sender's equilibrium value function, and the second process solves the sender's equilibrium value function for a broad class of disclosure games.

We apply these results to study the interaction between a hedge fund manager and a potential investor as a disclosure game with large evidence spaces and left-censoring technology. The state space is binary. The sender's evidence is a sequence of binary signals, and the sender can disclose a left truncation of her evidence to the receiver. In equilibrium, seemingly sub-optimal messages are disclosed by the sender, and the sender's equilibrium strategy favors longer messages over shorter ones, although the length of evidence has no intrinsic value in our model. Lastly, we show that increasing the expected length of evidence always decreases the sender's equilibrium value, while the effect of increasing the informativeness of signals depends on the sender's evidence endowment.

\appendix

\section{Proofs}

\subsection{Preliminaries}

We will repeatedly use Zorn's lemma in the proofs of our results. The statement of the lemma is as below. Recall that given a partially ordered set \((X,\leq)\), a \emph{chain} is a totally ordered subset of \(X\), an \emph{upper bound} of a subset \(A\subset X\) is an element \(x\in X\) such that \(x\geq y\) for all \(y\in A\), and a \emph{maximal element} of \(X\) is an element \(x\in X\) such that \(y\geq x\Rightarrow y=x\).

\begin{lmm}[Zorn's lemma] \label{lmm.a1}
	Let \(X\) be a partially ordered set. If every chain in \(X\) has an upper bound, then \(X\) has a maximal element.
\end{lmm}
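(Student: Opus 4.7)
The plan is to derive Zorn's lemma from the axiom of choice by a classical contradiction argument using transfinite recursion. The overall strategy is to suppose $X$ has no maximal element and then build a strictly increasing family indexed by the ordinals, producing an injection from a proper class into the set $X$, which is impossible.

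First I would suppose, for contradiction, that $X$ contains no maximal element. Under this assumption, for every chain $C \subset X$ the set of strict upper bounds of $C$ in $X$ is nonempty: by hypothesis $C$ has some upper bound $u \in X$, and since $u$ is not maximal there is some $v \in X$ with $v > u$, which is then a strict upper bound of $C$. Invoking the axiom of choice, I would select a function $f$ assigning to each chain $C \subset X$ a strict upper bound $f(C) \in X$.

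Next I would use transfinite recursion to define $x_\alpha \in X$ for each ordinal $\alpha$: pick any $x_0 \in X$, set $x_{\alpha+1} = f(\{x_\beta : \beta \leq \alpha\})$ at successor ordinals, and $x_\lambda = f(\{x_\beta : \beta < \lambda\})$ at limit ordinals. By transfinite induction, $\{x_\beta : \beta < \alpha\}$ is a chain at every stage, and $\beta < \gamma$ forces $x_\beta < x_\gamma$, since each newly adjoined element is a strict upper bound of its predecessors. Consequently $\alpha \mapsto x_\alpha$ is an order-preserving injection from the class of all ordinals into the set $X$.

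The main obstacle, and the source of the contradiction, is the size issue: by Hartogs' theorem, for any set $X$ there exists an ordinal that does not inject into $X$, so no injection from the proper class of all ordinals into $X$ can exist. This yields the desired contradiction, forcing $X$ to contain a maximal element. The delicate step to get right is verifying that the recursion is well-posed at every ordinal stage, and that strict monotonicity survives limit stages; both follow immediately from the fact that $f(C)$ is strictly greater than every element of $C$, but they must be stated carefully because the indexing is over the proper class of ordinals rather than a fixed set.
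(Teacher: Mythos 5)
Your proposal is correct: it is the classical derivation of Zorn's lemma from the axiom of choice via transfinite recursion, with the contradiction supplied by Hartogs' theorem (no ordinal-indexed injection into a set can exhaust the proper class of ordinals). The paper itself offers no proof of this statement --- Lemma \ref{lmm.a1} is recorded as a standard set-theoretic preliminary and simply invoked in the later arguments --- so there is nothing to compare against; your argument is the standard textbook one and all the essential steps are present: the strict-upper-bound observation under the no-maximal-element hypothesis, the choice function on the set of chains, the verification that each initial segment $\{x_\beta:\beta<\alpha\}$ remains a chain so the recursion is well-posed at both successor and limit stages, and the size contradiction. The only point worth tightening is the base case: rather than ``pick any $x_0\in X$'' (which presumes $X\neq\emptyset$), set $x_0=f(\emptyset)$, an upper bound of the empty chain, whose existence is guaranteed by the hypothesis; this also disposes of the degenerate case $X=\emptyset$, where the hypothesis fails vacuously.
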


By (A4$^\prime$) and Zorn's lemma, for all nonempty sets \(A\subset E\), the set of minimal elements of \(A\), i.e., \(\min(A,\precsim)=\{e\in A:(\forall e'\in A)(e'\precsim e\Rightarrow e'\sim e)\}\), is nonempty.

\subsection{Proof of Theorem \ref{thm1}} \label{sec.a1}

We divide the proof into a sequence of lemmas. Lemma \ref{lmm.a2} shows that in any equilibrium, the sender's value function satisfies conditions 1 through 3 of Theorem \ref{thm1}. Lemma \ref{lmm.a3} shows that a function \(V:E\to\mathbb{R}\) satisfying conditions 1 through 3 of Theorem \ref{thm1} is the sender's value function in an equilibrium. Lemma \ref{lmm.a4} shows that the value function is unique.

Throughout the proof, we use the following notation. Let \(\sigma:E\to\Delta(E)\) be a strategy of the sender. We denote by \(S^\sigma\) the distribution of messages induced by the sender's strategy \(\sigma\). That is, \(S^\sigma(m)=\sum_{e\in E}\sigma(m|e)F(e)\) for all \(m\in E\). For nonempty \(A\subset E\) and \(m\in E\), we similarly define the probability that message \(m\) is disclosed conditional on the sender's evidence endowment being in \(A\) as
\begin{equation*}
	S^\sigma(m|A)=\frac{1}{F(A)}\sum_{e\in A}\sigma(m|e)F(e).
\end{equation*}

\begin{lmm}[Necessity] \label{lmm.a2}
	Let \((\sigma^\star,\mathbf{a}^\star,\mu)\) be an equilibrium of the disclosure game, and \(V\) the sender's value function. Then \(V\) satisfies conditions 1 through 3 of Theorem \ref{thm1}.
\end{lmm}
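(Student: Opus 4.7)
The plan is to verify the three conditions on $V$ separately. Condition 1 is immediate: transitivity of $\precsim$ gives $\{m : m \precsim e\} \subset \{m : m \precsim e'\}$ whenever $e \precsim e'$, so the maximum defining $V(e)$ is taken over a subset of the one defining $V(e')$, hence $V(e) \leq V(e')$. Conditions 2 and 3 both reduce to Bayes-rule aggregation over on-path messages sorted by the receiver action they induce, with Condition 3 requiring one additional use of truth-leaning at the end.

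For Condition 2, fix $x \in Range(V)$ and set $S_x = V^{-1}(x)$. Sender optimality forces every $e \in S_x$ to report only messages $m$ with $\mathbf{a}^\star(m) = x$; receiver optimality together with strict monotonicity of $\phi$ then gives $\mu(m) = \phi^{-1}(x)$ at every such on-path $m$. Conversely, if $\sigma^\star(m|e) > 0$ with $\mathbf{a}^\star(m) = x$, then $V(e) = x$, so $e \in S_x$. Hence only senders in $S_x$ report into the set $M_x$ of on-path messages with action $x$. Writing Bayes at each $m \in M_x$, clearing denominators, summing over $m \in M_x$, and interchanging the order of summation (using $\sum_{m \in M_x,\, m \precsim e} \sigma^\star(m|e) = 1$ for every $e \in S_x$) I expect to obtain $\phi^{-1}(x) = \nu(S_x)$, which is the desired $\xi(V^{-1}(x)) = x$.

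Condition 3 is the main obstacle, and it is where the lower-contour hypothesis does real work. Fix $x \in Range(V)$ and a lower contour set $L$ in $S_x$; I argue by contradiction, supposing $\xi(L) < x$. The key consequence of the lower-contour property is that every sender $e \in L$ reports only on-path messages lying in $L$: any reported $m$ is in $\argmax_{m' \precsim e}\mathbf{a}^\star(m')$, so $\mathbf{a}^\star(m) = x$ and $m$ is on-path, giving $V(m) \geq x$; combined with monotonicity $V(m) \leq V(e) = x$, this forces $m \in S_x$, and then $m \precsim e \in L$ places $m$ in $L$ by the lower-contour definition. Running the Condition 2 aggregation over $M_L := M_x \cap L$ therefore produces an equation in which each $e \in L$ contributes with full weight $1$ while each $e \in S_x \setminus L$ contributes with some partial weight $q_e = \sum_{m \in M_L,\, m \precsim e} \sigma^\star(m|e) \in [0,1]$, and the resulting weighted $\nu$ equals $\phi^{-1}(x)$. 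Since $\nu(L) < \phi^{-1}(x)$, the partial-weight average over $S_x \setminus L$ must strictly exceed $\phi^{-1}(x)$, which yields some $e^\star \in S_x \setminus L$ with $q_{e^\star} > 0$ and $\xi(e^\star) > x$.

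The finish uses $e^\star$ itself as a candidate message. If $e^\star$ is off-path, the refinement gives $\mathbf{a}^\star(e^\star) = \xi(e^\star) > x = V(e^\star)$, contradicting $\mathbf{a}^\star(e^\star) \leq V(e^\star)$. If instead $e^\star$ is on-path, pick any reporter $e'$ with $\sigma^\star(e^\star|e') > 0$; then $\mathbf{a}^\star(e^\star) = V(e')$, Condition 1 gives $V(e') \geq V(e^\star) = x$, and $\mathbf{a}^\star(e^\star) \leq V(e^\star) = x$ pins down $\mathbf{a}^\star(e^\star) = x = V(e^\star)$. Truth-leaning at $e^\star$ then demands $\sigma^\star(e^\star|e^\star) = 1$, so $e^\star$ reports no other message; but $e^\star \notin L$ forces $q_{e^\star} = 0$, contradicting $q_{e^\star} > 0$ and completing the proof.
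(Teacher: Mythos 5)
Your proof is correct and follows essentially the same route as the paper: monotonicity from nestedness of lower contour sets, Bayes aggregation over the message classes $M_x$ for condition 2, and for condition 3 the observation that $M_L\subset L$ plus a mediant-inequality decomposition isolating the contribution of senders in $V^{-1}(x)\setminus L$, who must (by truth-leaning and the off-path belief rule) have $\nu(e)<\phi^{-1}(x)$. Your version merely organizes condition 3 as a contradiction and unpacks explicitly the on-path/off-path case analysis that the paper asserts in one line; the substance is identical.
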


\begin{proof}
	Condition 1 is by definition. If \(e_1\precsim e_2\), \(LC(e_1)\subset LC(e_2)\). Therefore, \(V(e_1)=\max_{m\in LC(e_1)}\mathbf{a}^\star(m)\leq\max_{m\in LC(e_2)}\mathbf{a}^\star(m)=V(e_2)\). That is, \(V\) is non-decreasing.

	For condition 2, let \(M_x=\{e:\xi(e)\geq V(e)=x\}\), which is the set of on-path messages at which the receiver takes action \(x\). That is, \(\phi(\mu(m))=x\) for all \(m\in M_x\). Since all sender types whose equilibrium payoff is \(x\) disclose messages in \(M_x\),
	\begin{equation*}
		\nu(V^{-1}(x))=\sum_{m\in M_x} S^{\sigma^\star}\big(m|V^{-1}(x)\big)\mu(m).
	\end{equation*}
	By in-betweenness (and continuity of \(\phi\), if \(M_x\) is countably infinite), we conclude that \(\xi(V^{-1}(x))=x\).

	For condition 3, let let \(M_L = \bigcup_{e\in L} supp(\sigma^\star(\cdot|e))\) be the set of on-path messages disclosed by sender types in \(L\). Fix any message \(m\in M_L\). Since it is disclosed only by sender types in \(V^{-1}(x)\),
	\begin{equation*}
		\mu(m) = \sum_{e\in V^{-1}(x)} \frac{\sigma^\star(m|e)F(e)}{S^{\sigma^\star}(m)}\nu(e).
	\end{equation*}
	Notice that for all \(e\in V^{-1}(x)\setminus L\), if \(\sigma^\star(m|e)>0\), disclosing \(e\) truthfully is not optimal for sender type \(e\). That is, \(\xi(e)=\phi(\nu(e))<x\). Therefore, by in-betweenness,
	\begin{equation*}
		\phi\left(\sum_{e\in L}\frac{\sigma^\star(m|e)F(e)}{S^{\sigma^\star}(m|L)F(L)}\nu(e)\right)\geq x.
	\end{equation*}
	Equality occurs in the above equation if and only if \(\sigma^\star(m|e)=0\) for all \(e\notin L\). Notice that
	\begin{equation*}
		\nu(L) = \sum_{e\in L} \frac{F(e)}{F(L)}\nu(e) = \sum_{m\in M_L} S^{\sigma^\star}(m|L)\left(\sum_{e\in L}\frac{\sigma^\star(m|e)F(e)}{S^{\sigma^\star}(m|L)F(L)}\nu(e)\right).
	\end{equation*}
	By in-betweenness, \(\xi(L)\geq x\). Equality occurs if \(\sigma^\star(m|e)=0\) for all \(m\in L\) and \(e\notin L\).
\end{proof}

\begin{lmm}[Sufficiency] \label{lmm.a3}
	Let \(V:E\to\mathbb{R}\) be a function satisfying conditions 1 through 3 of Theorem \ref{thm1}. Then there exists an equilibrium in which the sender's value function is \(V\).
\end{lmm}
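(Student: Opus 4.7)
Given a function \(V\) satisfying conditions 1--3, I plan to construct an equilibrium in which \(V\) is the sender's value function. The receiver's side is essentially pinned down by \(V\): for each \(e\), if \(V(e)\leq\xi(e)\) the evidence \(e\) should be on-path with \(\mathbf{a}^\star(e)=V(e)\) and \(\mu(e)=\phi^{-1}(V(e))\); if \(V(e)>\xi(e)\) the evidence \(e\) should be off-path with \(\mathbf{a}^\star(e)=\xi(e)\) and \(\mu(e)=\nu(e)\). Sender optimality is well posed because condition 1 guarantees, for every \(e\), the existence of some \(m\precsim e\) with \(\mathbf{a}^\star(m)=V(e)\). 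The substantive task is to construct a sender mixed strategy \(\sigma^\star\) under which Bayesian consistency delivers exactly these on-path posteriors.

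I would build \(\sigma^\star\) by transfinite induction on lower contour sets. Call a pair \((\sigma, L)\) \emph{admissible} if \(L\) is a lower contour set in \(E\), \(\sigma\) is a partial strategy defined on \(L\) supported only on argmax messages of value \(V(e)\), truth-leaning holds throughout \(L\), and for every on-path message \(m\in L\) the partial Bayes posterior computed from \(\sigma\) on \(L\) is at least \(\phi^{-1}(V(m))\). The admissible pairs ordered by coordinate-wise extension form a poset in which every chain has an upper bound given by the pointwise union, so Zorn's lemma yields a maximal admissible pair \((\sigma^\star, L^\star)\).

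To finish, I would show \(L^\star=E\) by contradiction. If \(L^\star\subsetneq E\), then by (A4\('\)) and Zorn's lemma the complement contains a \(\precsim\)-minimal element \(e_0\); every element strictly below \(e_0\) lies in \(L^\star\), so \(L^\star\cup[e_0]\) is again a lower contour set, where \([e_0]\) is the \(\sim\)-equivalence class of \(e_0\). By monotonicity of \(V\), every element of \([e_0]\) has the same \(V\)-value \(x\). I would define the extension on \([e_0]\) by letting each on-path \(e\in[e_0]\) (i.e.\ \(\xi(e)\geq x\)) report itself and distributing the mass of each off-path \(e\in[e_0]\) (i.e.\ \(\xi(e)<x\)) over the on-path messages of value \(x\) in its upper contour set inside \(L^\star\cup[e_0]\). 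Admissibility on \(L^\star\cup[e_0]\) then contradicts maximality of \((\sigma^\star, L^\star)\).

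The main obstacle is realising the extension while preserving the admissibility invariant. Condition 3 is the key tool: applied to any lower contour set \(L\) inside \(V^{-1}(x)\) consisting of off-path evidence, the inequality \(\xi(L)\geq x\) forces the presence of on-path targets in the upper contour set of \(L\) with enough spare capacity to absorb the off-path mass without driving any running posterior below \(\phi^{-1}(x)\). This is a Hall-type feasibility statement, and an admissible allocation can be produced greedily by iterating through off-path evidence in an order compatible with \(\precsim\). Condition 2 then closes the loop: once all of \(V^{-1}(x)\) has been processed, the aggregate Bayes posterior over messages of value \(x\) equals \(\nu(V^{-1}(x))=\phi^{-1}(x)\), and since each individual posterior remains at least \(\phi^{-1}(x)\) by the invariant, they must all equal \(\phi^{-1}(x)\). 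Hence Bayesian consistency holds with equality at every on-path message, and \((\sigma^\star, \mathbf{a}^\star, \mu)\) is an equilibrium with sender value function \(V\).
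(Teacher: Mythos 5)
Your proposal follows essentially the same route as the paper's proof: pin down the receiver's strategy and beliefs from \(V\), run a Zorn's-lemma/transfinite-induction argument over partial sender strategies defined on lower contour sets with the invariant that each running posterior weakly exceeds \(\phi^{-1}(V(m))\), extend at a minimal element of the complement by a greedy allocation whose feasibility rests on condition 3, and use condition 2 to force the invariant inequalities to hold with equality once the strategy covers all of \(E\). One small slip: each off-path \(e\in[e_0]\) must spread its mass over the on-path messages of value \(x\) in its \emph{lower} contour set (the feasible messages \(m\precsim e\)), not its upper contour set, and the existence of an optimal feasible message for every \(e\) uses condition 3 (not condition 1 alone).
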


\begin{proof}
	The proof is by construction. First, observe that in any equilibrium \((\sigma^\star,\mathbf{a}^\star,\mu)\), fixing any \(e\in E\), one of the following two cases happens:
	\begin{enumerate}
		\item \(e\) is an off-path message, \(\mathbf{a}^\star(e)=\xi(e)<V(e)\), and \(\mu(e)=\nu(e)\);
		\item The sender discloses truthfully if her evidence endowment is \(e\), and \(\mathbf{a}^\star(e)=V(e)\leq\xi(e)\).
	\end{enumerate}
	Therefore, any candidate equilibrium \((\sigma,\mathbf{a},\mu)\) in which the sender's value function is \(V\) satisfies the following conditions:
	\begin{enumerate}[label=(\alph*)]
		\item The set of on-path messages is \(O:=\{e:V(e)\leq\xi(e)\}\);
		\item For all \(e\in E\), \(\mathbf{a}(e)=\min\{V(e),\xi(e)\}\);
		\item For all \(m\in E\setminus O\), \(\mu(m)=\nu(m)\).
	\end{enumerate}
	The above conditions pin down the set of on-path messages, the receiver's strategy, and the receiver's off-path beliefs. The remainder of the proof constructs a strategy of the sender, \(\sigma=\{\sigma(\cdot|e)\}_{e\in E}\subset\Delta(E)\), and the receiver's beliefs at on-path messages, \(\mu|_O=\{\mu(m)\}_{m\in O}\subset\Delta(\Omega)\), such that:
	\begin{enumerate}[label=(\alph*),resume]
		\item For all \(e\in E\), \(supp(\sigma(\cdot|e))\subset LC(e)\cap V^{-1}(V(e))\cap O\);
		\item For all \(e\in O\), \(\sigma(e|e)=1\);
		\item For all \(n\) and all \(m\in O\),
		\begin{equation*}
			\mu_n(m) = \frac{\sum_{e\in E}\sigma(m|e)F_n(e)\pi_n}{\sum_{e\in E}\sigma(m|e)F(e)};
		\end{equation*}
		\item For all \(m\in O\), \(\phi(\mu(m))=V(m)\).
	\end{enumerate}
	Feasibility and sender optimality are implied by (d), truth-leaning by (e), Bayesian consistency by (f), and receiver optimality by (g). Hence, given a solution to (d) through (g), the resulted tuple \((\sigma,\mathbf{a},\mu)\) is an equilibrium in which the sender's value function is \(V\). The proof is completed.

	We now construct the sender's strategy and the receiver's beliefs at on-path messages. Let \(L\) be a lower contour set in \(E\). We say a collection of distributions \(\{\sigma(\cdot|e)\}_{e\in L}\) on \(E\) and a collection of distributions \(\{\mu(m)\}_{m\in L\cap O}\) on \(\Omega\) constitute a \emph{candidate pair} on \(L\) if they satisfy (d) through (f) on \(L\), and for all \(m\in O\cap L\), \(\phi(\mu(m))\geq V(m)\).

	If \(L=E\), we show that (g) is also satisfied. Suppose that, on the contrary, there exists \(m^\star\in O\) such that \(\phi(\mu(m^\star))>V(m^\star)\). Let \(A=V^{-1}(V(m^\star))\subset E\). Notice that
	\begin{equation*}
		\nu(A)=\sum_{m\in O\cap A} S^\sigma(m|A)\mu(m),
	\end{equation*}
	\(\phi(\mu(m))\geq V(m^\star)\) for all \(m\in O\cap A\), and \(\phi(\mu(m^\star))>V(m^\star)\). By in-betweenness, \(\phi(\nu(A))=\xi(A)>V(m^\star)\), which is a contradiction to condition 2 of Theorem \ref{thm1}. Hence, if \(L=E\), the candidate pair is a solution to (d) through (g).

	If \(L\subsetneq E\), let \(e^\star\) be a minimal element in \((E\setminus L,\precsim)\). We show below that we can extend the candidate pair to \(\tilde{L}=L\cup[e^\star]\), where \([e^\star]=\{e:e\sim e^\star\}\) is the equivalence class of \(e^\star\), in the following sense. We are to define distributions \(\sigma(\cdot|e)\) on \(E\) for all \(e\in[e^\star]\) and distributions \(\tilde{\mu}(m)\) on \(\Omega\) for all \(m\in O\cap\tilde{L}\) such that \(\{\sigma(\cdot|e)\}_{e\in\tilde{L}}\) and \(\{\tilde{\mu}(m)\}_{m\in\tilde{L}\cap O}\) constitute a candidate pair on \(\tilde{L}\). Notice that \(\tilde{\mu}\) and \(\mu\) need not be the same on \(L\cap O\).

	Let \(Q=\tilde{L}\cap V^{-1}(V(e^\star))\cap O\). First, let \(\sigma(m|e)=0\) for all \(e\in[e^\star]\) and \(m\notin Q\), and let \(\sigma(e|e)=1\) and \(\sigma(m|e)=0\) for all \(e\in[e^\star]\cap Q\) and \(m\neq e\). Thereby, (d) and (e) are satisfied on \(\tilde{L}\). Let \(\mu(m)=\nu(m)\) for all \(m\in[e^\star]\cap Q\). If \([e^\star]\subset Q\), i.e., if all of \([e^\star]\) are on-path, \(\{\sigma(\cdot|e)\}_{e\in\tilde{L}}\) is already well-defined. If \([e^\star]\not\subset Q\), let \(m_1,m_2,\dots\) be the elements of \(Q\), and we define \(\sigma(m_i|e)=z_i\) for all \(e\in[e^\star]\setminus Q\), where \(z_i\) is defined recursively as follows:
	\begin{equation*}
		z_i = \min\left\{\frac{\beta_i}{1-\beta_i}\frac{\sum_{e\in L\cup Q}\sigma(m_i|e)F(e)}{F([e^\star]\setminus Q)},1-\sum_{k=1}^{i-1}z_k\right\},
	\end{equation*}
	with the initial condition
	\begin{equation*}
		z_1 = \min\left\{\frac{\beta_1}{1-\beta_1}\frac{\sum_{e\in L\cup Q}\sigma(m_1|e)F(e)}{F([e^\star]\setminus Q)},1\right\},
	\end{equation*}
	where
	\begin{equation*}
		\beta_i = \max\left\{\beta:\phi\big(\beta\nu([e^\star]\setminus Q)+(1-\beta)\mu(m_i)\big)\geq V(e^\star)\right\}.
	\end{equation*}
	By in-betweenness, \(\beta_i\geq 0\), and by continuity of \(\phi\), \(\phi\big(\beta_i\nu([e^\star]\setminus Q)+(1-\beta_i)\mu(m_i)\big)=V(e^\star)\) for all \(i\). Intuitively, \(\beta_i\) measures the distance between \(\mu(m_i)\) and the ``worst'' belief at which choosing action \(V(e^\star)\) is receiver optimal. If \(\beta_i>0\), then sender types in \([e^\star]\setminus Q\) can disclose message \(m_i\) with positive probability. We are left to show that \(\sigma(\cdot|e)\) is well-defined for \(e\in[e^\star]\setminus Q\) through this way. That is, we need to show that \(\sum_i z_i=1\). Suppose not, i.e., \(\sum_i z_i<1\). Let \(A=V^{-1}(V(e^\star))\). Then
	\begin{align*}
		\nu(\tilde{L}\cap A) &= \frac{F([e^\star]\setminus Q)}{F(\tilde{L})} \nu([e^\star]\setminus Q)+\sum_i\lambda_i\mu(m_i) \\
		&= \sum_i \frac{\lambda_i}{1-\beta_i}\big(\beta_i\nu([e^\star]\setminus Q)+(1-\beta_i)\mu(m_i)\big) + \left(1-\sum_i z_i\right)\frac{F([e^\star]\setminus Q)}{F(\tilde{L})}\nu([e^\star]\setminus Q),
	\end{align*}
	where
	\begin{equation*}
		\lambda_i = \frac{\sum_{e\in L\cup Q}\sigma(m_i|e)F(e)}{F(\tilde{L})}.
	\end{equation*}
	By in-betweenness, \(\phi(\nu(\tilde{L}\cap A))=\xi(\tilde{L}\cap A)<V(e^\star)\). But \(\tilde{L}\cap A\) is a lower contour set in \(A\), so this contradicts condition 3 of Theorem \ref{thm1}. Hence, \(\{\sigma(\cdot|e)\}_{e\in\tilde{L}}\) is well-defined.

	We now define \(\tilde{\mu}:\tilde{L}\cap O\to\Delta{\Omega}\) such that
	\begin{equation}\label{eq.x1}
		\tilde{\mu}_n(m) = \frac{\sum_{e\in\tilde{L}}\sigma(m|e)F_n(e)\pi_n}{\sum_{e\in\tilde{L}}\sigma(m|e)F(e)}
	\end{equation}
	for all \(n\) and all \(m\in\tilde{L}\cap O\). Then (f) is satisfied on \(\tilde{L}\). We are left to check that \(\phi(\tilde{\mu}(m))\geq V(m)\) for all \(m\in O\cap\tilde{L}\). Notice that \(\tilde{\mu}(m)=\mu(m)\) for all \(m\notin Q\), so we only need to check this for each \(m_i\in Q\). Notice that \(\tilde{\mu}(m_i)=\alpha_i\nu([e^\star]\setminus Q)+(1-\alpha_i)\mu(m_i)\), where
	\begin{equation*}
		\alpha_i = \frac{z_i F([e^\star]\setminus Q)}{z_i F([e^\star]\setminus Q)+\sum_{e\in L\cup Q}\sigma(m_i|e)F(e)}.
	\end{equation*}
	By construction, \(\alpha_i\leq\beta_i\), so \(\phi(\tilde{\mu}(m_i))\geq V(e^\star)=V(m_i)\). Hence, \(\{\sigma(\cdot|e)\}_{e\in\tilde{L}}\) and \(\{\tilde{\mu}(m)\}_{m\in\tilde{L}\cap O}\) constructed above constitute a candidate pair on \(\tilde{L}\).

	We can now proceed with transfinite induction. Let \(\mathcal{X}\) be the space of all candidate pairs on some lower contour set in \(E\). Let \(x_1=(\{\sigma_1(\cdot|e)\}_{e\in L_1},\{\mu_1(m)\}_{m\in L_1\cap O})\) and \(x_2=(\{\sigma_2(\cdot|e)\}_{e\in L_2},\{\mu_2(m)\}_{m\in L_2\cap O})\) be two typical elements in \(\mathcal{X}\). We can define a partial order, denoted \(\sqsubset\), on \(\mathcal{X}\), such that \(x_1\sqsubset x_2\) if and only if \(L_1\subset L_2\) and \(\sigma_1(\cdot|e)=\sigma_2(\cdot|e)\) for all \(e\in L_1\). For every chain \(\mathscr{C}=\{x_\alpha\}_{\alpha\in\mathscr{I}}\) in \(\mathcal{X}\), it has an upper bound \(\tilde{x}=(\{\tilde{\sigma}(\cdot|e)\}_{e\in\tilde{L}},\{\tilde{\mu}(m)\}_{m\in\tilde{L}\cap O})\), where \(\tilde{L}=\cup_{\alpha\in\mathscr{I}}L_\alpha\), \(\tilde{\sigma}(\cdot|e)=\sigma_\alpha(\cdot|e)\) for some \(\alpha\in\mathscr{I}\) such that \(L_\alpha\ni e\)\footnote{Since \(\mathcal{X}\) is totally ordered, for all \(\alpha,\alpha'\in\mathscr{I}\), \(x_\alpha\sqsubset x_{\alpha'}\) or \(x_{\alpha'}\sqsubset x_\alpha\). Therefore, if \(e\) is in both \(L_\alpha\) and \(L_{\alpha'}\), \(\sigma_\alpha(\cdot|e)=\sigma_{\alpha'}(\cdot|e)\). We can thus define \(\tilde{\sigma}(\cdot|e)\) using any \(\alpha\in\mathscr{I}\) such that \(L_\alpha\ni e\).}, and \(\tilde{\mu}(m)\) is defined by (\ref{eq.x1}). By construction, \(\tilde{x}\) satisfies (d) through (f). By continuity of \(\phi\), \(\phi(\tilde{\mu}(m))\geq V(m)\) for all \(m\in O\cap\tilde{L}\). Hence, \(\tilde{x}\in\mathcal{X}\) is a candidate pair on \(\tilde{L}\). By Zorn's lemma, \(\mathcal{X}\) has a maximal element. Let \(\hat{x}\) be one such maximal element, and suppose that it is a candidate pair on some lower contour set \(\hat{L}\) in \(E\). If \(\hat{L}\neq E\), we have shown that we can find extend \(\hat{x}\) to a superset of \(\hat{L}\), which violates maximality. Hence, \(\hat{L}=E\), and by the earlier argument, \(\hat{x}\) is a solution to (d) through (g).

	It remains to show that \(\mathcal{X}\) is nonempty, i.e., there exists a candidate pair. Notice that the above extension process applies when \(L=\emptyset\). When \(L=\emptyset\), let \(e^\star\) be a minimal element of \(E\), and \(Q=[e^\star]\cap O\) is nonempty by condition 2 of Theorem \ref{thm1}. Following the extension process, we can construct a candidate pair on \([e^\star]\). Hence, \(\mathcal{X}\) is nonempty.
\end{proof}

\begin{lmm}[Uniqueness] \label{lmm.a4}
	The function that satisfies conditions 1 through 3 of Theorem \ref{thm1} is unique.
\end{lmm}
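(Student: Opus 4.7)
The plan is to argue by contradiction: suppose $V_1, V_2$ both satisfy conditions~1--3 yet $V_1\neq V_2$, and let $S:=\{e\in E:V_1(e)\neq V_2(e)\}$. WLOG take $S^-:=\{e:V_1(e)<V_2(e)\}$ to be nonempty. By (A4$^\prime$) and Zorn's lemma as in the Preliminaries, $S^-$ admits $\precsim$-minimal elements; I refine the choice by taking $e^*$ to be $\precsim$-minimal in $S^-$ with the smallest value of $V_1(e^*)$ among such minimal elements, and write $x_1:=V_1(e^*)<V_2(e^*)=:x_2$. This refined choice ensures $V_1(e)\geq x_1$ for every $e\in S^-$: any counterexample $e$ would admit, by (A4$^\prime$), a $\precsim$-minimal predecessor $\bar e\in S^-$ with $V_1(\bar e)\leq V_1(e)<x_1$, contradicting the refinement. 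Moreover, for $e'\prec e^*$, minimality of $e^*$ in $S^-$ rules out $e'\in S^-$; in the remaining cases $e'\notin S$ and $e'\in S^+:=\{V_1>V_2\}$ one gets $V_2(e')\leq V_1(e')\leq x_1<x_2$, so in particular $V_2(e')<x_2$.

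The first key step shows $[e^*]$ is a lower contour set in $V_2^{-1}(x_2)$: any $e'\in V_2^{-1}(x_2)$ with $e'\precsim e^*$ must satisfy $e'\sim e^*$, since $e'\prec e^*$ would force $V_2(e')<x_2$. Condition~3 on $V_2$ then yields $\nu([e^*])\geq \phi^{-1}(x_2)>\phi^{-1}(x_1)$. Next, define $T:=\{e\in V_1^{-1}(x_1):V_2(e)>x_1\}\supset[e^*]$. Because $V_2$ is non-decreasing, $T$ is upward-closed in $V_1^{-1}(x_1)$, so $V_1^{-1}(x_1)\setminus T$ is a lower contour set of $V_1^{-1}(x_1)$. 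Condition~3 on $V_1$ gives $\xi(V_1^{-1}(x_1)\setminus T)\geq x_1$; combining this with condition~2 on $V_1$ ($\xi(V_1^{-1}(x_1))=x_1$) and the weighted-average form of $\nu$ used in the proof of Lemma~\ref{lmm.a2} forces $\nu(T)\leq \phi^{-1}(x_1)$.

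For the contradictory reverse inequality, decompose $T=\bigsqcup_{y>x_1}T_y$ with $T_y:=T\cap V_2^{-1}(y)$. For each such $y$ the set $T'_y:=\{e\in V_2^{-1}(y):V_1(e)\leq x_1\}$ is a lower contour set in $V_2^{-1}(y)$ by monotonicity of $V_1$. The refined choice of $e^*$ rules out any $e\in V_2^{-1}(y)$ with $V_1(e)<x_1$ (such $e$ would lie in $S^-$ yet violate $V_1\geq x_1$), so $T'_y=T_y$. Condition~3 on $V_2$ therefore yields $\nu(T_y)=\nu(T'_y)\geq \phi^{-1}(y)>\phi^{-1}(x_1)$, and averaging over $y$ gives $\nu(T)>\phi^{-1}(x_1)$, contradicting the previous paragraph. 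Hence $V_1=V_2$.

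The main obstacle is the initial selection of $e^*$: on a countably infinite $E$ the infimum of $V_1$ over $\precsim$-minimal elements of $S^-$ may fail to be attained in $V_1(S^-)$. Handling this edge case requires a transfinite induction (paralleling the one used in the proof of Lemma~\ref{lmm.a3}) to descend through $\precsim$-minimal witnesses with strictly decreasing $V_1$-values and pass to the limit, appealing to conditions~2 and~3 on $V_1$ to rule out runaway descent. Once this selection is carried out, the three-step argument above closes cleanly.
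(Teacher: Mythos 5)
Your argument has a genuine gap at the very first step, and you have correctly diagnosed where it is but not closed it. Everything downstream of the selection of $e^*$ is sound: granting an $e^*\in S^-$ with $V_1(e^*)=x_1\leq V_1(e)$ for all $e\in S^-$, the chain $T\supset[e^*]\neq\emptyset$, the upward-closedness of $T$ in $V_1^{-1}(x_1)$ giving $\nu(T)\leq\phi^{-1}(x_1)$ via conditions 2 and 3 on $V_1$, and the decomposition $T=\bigcup_{y>x_1}T_y$ with $T_y=T_y'$ giving $\nu(T)>\phi^{-1}(x_1)$ via condition 3 on $V_2$, all check out (the step about $\nu([e^*])\geq\phi^{-1}(x_2)$ is in fact never used; only nonemptiness of $T$ matters). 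But the whole argument hinges on the infimum of $V_1$ over $S^-$ being \emph{attained}. The set $\min(S^-,\precsim)$ guaranteed nonempty by (A4$^\prime$) and Zorn's lemma can be an infinite antichain (e.g.\ in the Dye or multidimensional structures), and $V_1$ restricted to it is just a countable set of reals whose infimum need not be a value of $V_1$ on $S^-$. Your proposed repair --- ``descend through minimal witnesses with strictly decreasing $V_1$-values and pass to the limit'' --- is not an argument: the witnesses with decreasing $V_1$-values need not be $\precsim$-comparable, so there is nothing to descend along, and there is no evidence point at the ``limit'' to which conditions 2 and 3 could be applied. Without this step, property (b) ($V_1\geq x_1$ on $S^-$), which is what makes $T_y'=T_y$ in your third paragraph, has no anchor $x_1\in Range(V_1)$, and $T$ itself may be empty.

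For contrast, the paper's proof avoids any extremal selection. It first reduces (via condition 2) to the case where there exist $e_1,e_2$ with $V(e_1)=V(e_2)$ but $W(e_1)<W(e_2)$, then sandwiches the single set $U_v\cap L_w$ (where $U_v$ is the upper contour set of $V$-values $\geq V(e_2)$ and $L_w$ the lower contour set of $W$-values $\leq W(e_1)$): condition 3 applied to $V$ slicewise gives $\xi(U_v\cap L_w)\geq V(e_2)$, while conditions 2 and 3 applied to $W$ slicewise give $\xi(U_v\cap L_w)\leq W(e_1)$, and the symmetric pair of inequalities yields a contradiction with no minimization over $E$ required. If you want to salvage your route, you need to either prove attainment of $\inf_{e\in S^-}V_1(e)$ from conditions 1--3 (which seems to require essentially the paper's machinery anyway) or restructure the argument so that it works with an arbitrary $e^*\in S^-$ rather than a minimizing one.
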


\begin{proof}
	Suppose, contrary to our claim, that \(V\neq W\) are two functions that satisfy conditions 1 through 3 of Theorem \ref{thm1}. Without loss of generality, there exist \(e_1,e_2\in E\) such that \(V(e_1)=V(e_2)\) and \(W(e_1)<W(e_2)\).\footnote{Otherwise, either for all \(e_1\neq e_2\), \(V(e_1)\neq V(e_2)\) and \(W(e_1)\neq W(e_2)\), or \(V(e_1)=V(e_2)\Leftrightarrow W(e_1)=W(e_2)\). In the former case, \(V(e)=W(e)=\xi(e)\) for all \(e\in E\), which is a contradiction. In the latter case, for all \(x\in Range(V)\), \(\exists y\in Range(W)\) such that \(V^{-1}(x)=W^{-1}(y)\) (and vice versa). But by condition 2 of Theorem \ref{thm1}, \(x=\xi(V^{-1}(x))=\xi(V^{-1}(y))=y\). Hence, \(V=W\), which is a contradiction.} Let
	\begin{equation*}
		L_w = \bigcup_{x\leq W(e_1)}W^{-1}(x),~U_w = \bigcup_{x\geq W(e_2)}W^{-1}(x),
	\end{equation*}
	\begin{equation*}
		L_v = \bigcup_{y\leq V(e_1)}V^{-1}(y),~U_v = \bigcup_{y\geq V(e_2)}V^{-1}(y).
	\end{equation*}
	By condition 1 of Theorem \ref{thm1}, \(L_w\) is a lower contour set in \(E\). Therefore, for all \(y\geq V(e_2)\) and \(y\in Range(V)\), \(V^{-1}(y)\cap L_w\) is a lower contour set in \(V^{-1}(y)\) (or empty). By condition 3 of Theorem \ref{thm1}, \(\xi(V^{-1}(y)\cap L_w)\geq y\geq V(e_2)\) for all \(y\) such that \(V^{-1}(y)\cap L_w\neq\emptyset\). Hence, \(\xi(U_v\cap L_w)\geq V(e_2)\). But \(U_v\) is an upper contour set in \(E\). Therefore, for all \(x\leq W(e_1)\) and \(x\in Range(W)\), \(U_v\cap W^{-1}(x)\) is an upper contour set in \(W^{-1}(x)\) (or empty), so \(\xi(U_v\cap W^{-1}(x))\leq x\leq W(e_1)\) for all \(x\) such that \(U_v\cap W^{-1}(x)\neq\emptyset\). Hence, \(\xi(U_v\cap L_w)\leq W(e_1)\). That is,
	\begin{equation} \label{eq.a8}
		V(e_2)\leq\xi(U_v\cap L_w)\leq W(e_1).
	\end{equation}
	Similarly, we can show that
	\begin{equation} \label{eq.a9}
		W(e_2)\leq\xi(U_w\cap L_v)\leq V(e_1).
	\end{equation}
	The inequalities in (\ref{eq.a8}) and (\ref{eq.a9}) are in contradiction with the assumptions that \(W(e_1)<W(e_2)\) and \(V(e_1)=V(e_2)\). Hence, the function that satisfies conditions 1 through 3 of Theorem \ref{thm1} is unique.
\end{proof}

\subsection{Proof of Corollary \ref{crl2}}

\begin{proof}
	Given an equilibrium and the sender's value function \(V\), we can define an ordered partition \((\mathcal{A},\leq_P)\) of \(E\), where \(\mathcal{A}=\{V^{-1}(x):x\in Range(V)\}\), and \(A_1\leq_P A_2\Leftrightarrow\xi(A_1)\leq\xi(A_2)\). By construction, the ordered partition \((\mathcal{A},\leq_P)\) satisfies conditions 1 and 3 of Corollary \ref{crl2}. By Theorem \ref{thm1}, for all \(A_1,A_2\in\mathcal{A}\) and \(e_1\in A_1,e_2\in A_2\) such that \(e_1\precsim e_2\), \(\xi(A_1)=V(e_1)\leq V(e_2)=\xi(A_2)\). That is, \(A_1\leq_P A_2\), and condition 2 of Corollary \ref{crl2} is satisfied.

	Given any ordered partition \((\mathcal{A},\leq_P)\) of \(E\) that satisfies conditions 1 through 3 of Corollary \ref{crl2}, we can define \(V:E\to[0,1]\) such that \(V(e)=\xi(A)\) for all \(A\in\mathcal{A}\) and \(e\in A\). By construction, \(V\) satisfies conditions 1 through 3 of Theorem \ref{thm1}, so it is the sender's value function in an equilibrium.

	Let \((\mathcal{A},\leq_P)\) and \((\mathcal{A}^\star,\leq_P^\star)\) both satisfy conditions 1 through 3 of Corollary \ref{crl2}. Suppose that \(\mathcal{A}\neq\mathcal{A}^\star\). Then, without loss of generality, there exist \(e,e'\in E\), \(A,A'\in\mathcal{A}\), \(A\neq A'\), and \(A^\star\in\mathcal{A}^\star\) such that \(e\in A\), \(e'\in A'\), \(e,e'\in A^\star\). Hence, \(V(e)=V(e')=\xi(A^\star)\). But \(V(e)=\xi(A)\neq \xi(A')=V(e')\). Contradiction reached. Hence, \(\mathcal{A}=\mathcal{A}^\star\). By condition 1 of Corollary \ref{crl2}, \(A_1\leq_P A_2\Leftrightarrow \xi(A_1)\leq\xi(A_2)\Leftrightarrow A_1\leq_P^\star A_2\) for all \(A_1,A_2\in\mathcal{A}=\mathcal{A}^\star\). Hence, \(\leq_P\) is also identical to \(\leq_P^\star\). That is, \((\mathcal{A},\leq_P)=(\mathcal{A}^\star,\leq_P^\star)\). Uniqueness is shown.
\end{proof}

\subsection{Proof of Theorem \ref{thm3}} \label{sec.a4}
\begin{proof}
	For the ``if'' part, let \((\mathcal{A},\leq_P)\) be an ordered partition of \(E\) such that every \(A\in\mathcal{A}\) is the largest minimizer of (\ref{eq2.7}). We show that \((\mathcal{A},\leq_P)\) satisfies conditions 1 through 3 of Corollary \ref{crl2} and therefore is the equilibrium partition. Suppose that condition 1 is violated. Then there exist \(A_1 <_P A_2\) such that \(\xi(A_1)\geq\xi([A_1,A_2])\), where \([A_1,A_2]=\bigcup\{A':A_1 \leq_P A' \leq_P A_2\}\). This contradicts the assumption that \(A_1\) is the largest minimizer of (\ref{eq2.7}). Hence, condition 1 is satisfied. Let \(A_1,A_2\in\mathcal{A}\), \(e_1\in A_1\), \(e_2\in A_2\) such that \(e_1\precsim e_2\). Since \(A_2\) is a lower contour set in \(E\setminus\cup_{A'<_P A_2}A'\), \(e_1\in A_2\) or \(e_1\in\cup_{A'<_P A_2}A'\). Hence, \(A_1\leq_P A_2\); condition 2 is satisfied. Let \(A\in\mathcal{A}\), and let \(L\) be a lower contour set in \(A\). Since \(L\) is a lower contour set in \(E\setminus\cup_{A'<_P A}A'\), and \(A\) solves (\ref{eq2.7}), \(\xi(L)\geq\xi(A)\); condition 3 is satisfied.

	For the ``only if'' part, suppose that \((\mathcal{A},\leq_P)\) is the equilibrium partition, we are to show that every \(A\in\mathcal{A}\) is the largest minimizer of (\ref{eq2.7}). To obtain a contradiction, consider two separate cases. First, suppose that for some \(A_0\in\mathcal{A}\), \(A_0\) does not solve (\ref{eq2.7}). Then there exists a lower contour set \(L\) in \(E\setminus\bigcup_{A'<_P A_0}A'\) such that \(\xi(L)<\xi(A_0)\). Notice that \(\{A\cap L\}_{A\geq_P A_0}\) is a partition of \(L\), and by assumption, \(\xi(A\cap L)\geq\xi(A)\geq\xi(A_0)\) for all \(A\geq_P A_0\) such that \(A\cap L\neq\emptyset\). Hence, \(\xi(L)\geq\xi(A_0)\), a contradiction. Second, suppose that there exists \(A_0\in\mathcal{A}\) that is not the largest minimizer of (\ref{eq2.7}). Then there exists \(A_0'\not\subset A_0\) that solves (\ref{eq2.7}). Since \(A_0'\setminus A_0\) is an upper contour set in \(A_0'\), \(\xi(A_0'\setminus A_0)\leq\xi(A_0')=\xi(A_0)\). But \(\{A\cap A_0'\}_{A>_P A_0}\) is a partition of \(A_0'\setminus A_0\), and by assumption, \(\xi(A\cap A_0')\geq\xi(A)>\xi(A_0)\) for all \(A>_P A_0\) such that \(A\cap A_0'\neq\emptyset\). Hence, \(\xi(A_0'\setminus A_0)>\xi(A_0)\), a contradiction.
\end{proof}

\subsection{Results Relating to Equilibrium Existence} \label{sec.a5}
\begin{lmm} \label{lmm.a5}
	Let \(A\subset E\), and suppose that
	\begin{equation} \label{eq.a10}
		\min_{L} \xi(L) \quad s.t. \quad \text{\(L\) is a lower contour set in \(A\)}
	\end{equation}
	admits a solution. Then (\ref{eq.a10}) has a largest minimizer.
\end{lmm}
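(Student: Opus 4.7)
The plan is to show that the family of minimizers of (\ref{eq.a10}) is closed under arbitrary union, so that the union of all minimizers is itself the largest minimizer. Let $d = \phi^{-1}(\min_L \xi(L))$. Since $\phi$ is strictly increasing, a lower contour set $L$ in $A$ minimizes $\xi$ if and only if $\nu(L) = d$.

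The first step is to show that the union of two minimizers is again a minimizer. Let $L_1, L_2$ be minimizers. Both $L_1 \cap L_2$ and $L_1 \cup L_2$ are lower contour sets in $A$ (the intersection possibly empty), so $\nu(L_1 \cap L_2) \geq d$ when nonempty and $\nu(L_1 \cup L_2) \geq d$. The key observation is that for any disjoint decomposition $S = S' \sqcup S''$, the quantity $\nu(S)$ is a convex combination of $\nu(S')$ and $\nu(S'')$ with weights $w(\cdot) = F_G(\cdot)\pi_0 + F_B(\cdot)(1-\pi_0)$. Decomposing $L_1 = (L_1 \cap L_2) \sqcup (L_1 \setminus L_2)$, the identity $\nu(L_1) = d$ together with $\nu(L_1 \cap L_2) \geq d$ forces $\nu(L_1 \setminus L_2) \leq d$; by symmetry, $\nu(L_2 \setminus L_1) \leq d$. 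Writing $L_1 \cup L_2 = L_1 \sqcup (L_2 \setminus L_1)$ then exhibits $\nu(L_1 \cup L_2)$ as a convex combination of $\nu(L_1) = d$ and $\nu(L_2 \setminus L_1) \leq d$, so $\nu(L_1 \cup L_2) \leq d$; combined with the reverse inequality, $\nu(L_1 \cup L_2) = d$.

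The second step passes from finite to arbitrary unions. Let $L^* = \bigcup\{L : L \text{ minimizes (\ref{eq.a10})}\}$; as a union of lower contour sets in $A$, it is itself a lower contour set in $A$. Since $E$ is countable, so is $L^*$; enumerate $L^* = \{e_1, e_2, \ldots\}$ and for each $n$ pick a minimizer $L_n$ with $e_n \in L_n$. Iterating Step 1, every finite union $M_n := L_1 \cup \cdots \cup L_n$ is a minimizer, and $M_n \uparrow L^*$. Countable additivity of $F_G$ and $F_B$ yields $F_G(L^*) = \lim_n F_G(M_n)$ and $F_B(L^*) = \lim_n F_B(M_n)$, so (with denominator bounded below by $w(M_1) > 0$) $\nu(L^*) = \lim_n \nu(M_n) = d$. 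Hence $L^*$ is a minimizer, and by construction it contains every other minimizer, so it is the largest. The only delicate point is the overlap case in Step 1, where the lower-contour-set property of $L_1 \cap L_2$ in $A$ is essential; Step 2 is then a routine continuity-of-measure argument using countability of $E$.
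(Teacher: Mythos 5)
Your proof is correct, but it takes a different formal route from the paper's. The paper proves this lemma with Zorn's lemma: it shows every chain of minimizers (ordered by inclusion) has an upper bound given by its union, extracts a maximal minimizer $L^\star$, and then upgrades maximality to greatestness by asserting that $\xi(L'\cup L^\star)=\underline{v}$ for any other minimizer $L'$ --- an assertion it does not justify in detail. You instead organize the whole argument around closure of the minimizer family under unions: you prove the binary-union step explicitly via the convex-combination identity $\nu(S'\sqcup S'')=\bigl(w(S')\nu(S')+w(S'')\nu(S'')\bigr)/\bigl(w(S')+w(S'')\bigr)$, using that $L_1\cap L_2$ is itself a feasible lower contour set to force $\nu(L_1\setminus L_2)\leq d$, and then you pass to the full union by a countable exhaustion $M_n\uparrow L^\ast$ and continuity of $\nu$ along increasing sequences (valid since $E$ is countable and $w(M_1)>0$ bounds the denominators away from zero). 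What your version buys is that it supplies the two justifications the paper leaves implicit --- both that finite unions of minimizers are minimizers and that $\nu$ passes to the limit along the union of a chain --- and it avoids Zorn's lemma entirely, at the cost of leaning on countability of $E$ (which the paper assumes anyway, so nothing is lost). Both proofs rest on the same underlying fact, namely that the set of minimizers is closed under union; yours makes that fact the explicit engine of the argument.
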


\begin{proof}
	Let \(\mathcal{L}\) be the set of minimizers of (\ref{eq.a10}), and \(\underline{v}\) the minimum of (\ref{eq.a10}). For every chain \(\{L_\alpha\}_{\alpha\in\mathscr{I}}\) in \((\mathcal{L},\subset)\), let \(\bar{L}=\cup_{\alpha\in\mathscr{I}}L_\alpha\). \(\bar{L}\) is an upper bound of \(\{L_\alpha\}_{\alpha\in\mathscr{I}}\). Moreover, \(\bar{L}\) is a lower contour set in \(A\), and \(\xi(\bar{L})=\underline{v}\). Hence, \(\bar{L}\in\mathcal{L}\). By Zorn's lemma, there exists a maximal element of \((\mathcal{L},\subset)\). Let \(L^\star\) be one maximal element. Suppose that \(L^\star\) is not the greatest element in \((\mathcal{L},\subset)\), then there exists \(L'\in\mathcal{L}\) such that \(L'\not\subset L^\star\). \(\xi(L'\cup L^\star)=\underline{v}\), so \(L'\cup L^\star\in\mathcal{L}\). This is a contradiction to maximality of \(L^\star\). Hence, \(L^\star\) is the largest minimizer of (\ref{eq.a10}).
\end{proof}

\begin{lmm} \label{lmm.a6}
	Let \(A\subset E\) have finitely many minimal elements. Then (\ref{eq.a10}) has a largest minimizer.
\end{lmm}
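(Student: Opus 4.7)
My plan is to invoke Lemma \ref{lmm.a5}, which reduces the task to exhibiting any minimizer of (\ref{eq.a10}). Since \(\phi\) is strictly increasing, minimizing \(\xi(L)=\phi(\nu(L))\) is equivalent to minimizing \(\nu(L)\) over nonempty lower contour sets \(L\subset A\), so I work with \(\nu\) directly. The overall strategy is a compactness-plus-diagonalization argument in which the finite-minimal-element hypothesis is precisely what prevents a minimizing sequence from degenerating to the empty set.

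First I would fix a minimizing sequence \(\{L_n\}\) with \(\nu(L_n)\to\underline{\nu}:=\inf_L\nu(L)\). Each \(L_n\) is nonempty, and by the preliminary consequence of (A4\(^\prime\)) stated in Section \ref{sec.a1}, \(L_n\) has minimal elements; any such minimal element must also be minimal in \(A\), for otherwise some strictly smaller \(e''\in A\) would lie in \(L_n\) by downward closure, contradicting minimality within \(L_n\). Letting \(M=\min(A,\precsim)\), which is finite by hypothesis, this yields \(L_n\cap M\ne\emptyset\) for every \(n\). Since \(2^M\) is finite, pigeonhole lets me pass to a subsequence along which \(L_n\cap M=S\) is a constant nonempty subset of \(M\).

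Next, using countability of \(E\), I enumerate \(A=\{a_1,a_2,\dots\}\) and extract a diagonal subsequence along which \(\chi_{L_n}(a_k)\) is eventually constant in \(n\) for every \(k\). Setting \(L^\star=\{e\in A:e\in L_n\text{ eventually}\}\), pointwise convergence preserves downward closure, so \(L^\star\) is a lower contour set in \(A\); since \(S\subset L_n\) throughout the subsequence, \(L^\star\supset S\) and hence \(L^\star\ne\emptyset\). Dominated convergence (using that \(F_G,F_B\) are finite measures on the countable space \(E\)) then gives \(F_G(L_n)\to F_G(L^\star)\) and \(F_B(L_n)\to F_B(L^\star)\); nonemptiness of \(L^\star\) keeps the denominator of \(\nu\) bounded away from zero along the limit, so \(\nu(L_n)\to\nu(L^\star)\), forcing \(\nu(L^\star)=\underline{\nu}\). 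Thus \(L^\star\) minimizes \(\nu\) and hence \(\xi\), and Lemma \ref{lmm.a5} converts this existence into a largest minimizer.

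I expect the main obstacle to be exactly ensuring \(L^\star\ne\emptyset\): without it, \(\nu(L^\star)\) is undefined and the limit argument collapses. The finite-\(M\) hypothesis is what rescues the argument, by anchoring the subsequence to a common nonempty \(S\subset M\). Absent this condition the conclusion can genuinely fail (consider, e.g., the trivial preorder on \(\mathbb{N}\) with \(\nu(n)\downarrow 0\)), which is exactly why the hypothesis of Lemma \ref{lmm.a6} cannot be weakened.
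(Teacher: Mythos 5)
Your proof is correct, and its skeleton matches the paper's: exhibit one minimizer of (\ref{eq.a10}) as a set-theoretic limit of a minimizing sequence, use finiteness of \(\min(A,\precsim)\) to keep that limit nonempty, and then invoke Lemma \ref{lmm.a5} to upgrade existence to a largest minimizer. The extraction step, however, is genuinely different. The paper decomposes by minimal element: for each of the finitely many \(e\in\min(A,\precsim)\) it minimizes over lower contour sets containing \(e\), takes the limit superior \(\bar L=\bigcap_n\bigcup_{m\geq n}L_m\) of a minimizing sequence for that constrained problem (\ref{eq.a11}), passes the value to the limit via the inequality \(\nu(\bigcup_{m\geq n}L_m)\leq\nu(L_n)\), and then takes the best of the finitely many constrained minima. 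You instead keep a single minimizing sequence for the unconstrained problem, pigeonhole on the trace \(L_n\cap\min(A,\precsim)\) to fix a nonempty anchor \(S\), and extract a pointwise limit of indicator functions by diagonalization over the countable set \(A\), passing to the limit by dominated convergence. Your route buys two things: it does not rely on the monotonicity claim \(\nu(\bigcup_{m\geq n}L_m)\leq\nu(L_n)\), which the paper asserts without argument and which does not follow merely from \(\nu(L_m)\leq\nu(L_n)\) for \(m\geq n\) (since \(\nu\) is a likelihood ratio rather than a measure, adjoining a set of lower \(\nu\)-value need not lower \(\nu\) of the union of the first set with its new part); and it makes the role of the finiteness hypothesis transparent, as exactly what prevents the limit set from being empty. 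What the paper's version buys in exchange is that it needs no enumeration of \(E\) and records the slightly stronger intermediate fact that each constrained problem (\ref{eq.a11}) attains its infimum.
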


\begin{proof}
	Let \(e\) be a minimal element in \(A\), and consider the problem
	\begin{equation} \label{eq.a11}
		\inf_{L}\xi(L) \quad s.t. \quad \text{\(L\ni e\) and \(L\) is a lower contour set in \(A\)}.
	\end{equation}
	Let \(\underline{v}\) be the infimum of (\ref{eq.a11}), and \(\{L_n\}_{n=1}^\infty\) a sequence of lower contour sets in \(A\) containing \(e\) such that \(\xi(L_n)\downarrow\underline{v}\). Equivalently, \(\nu(L_n)\downarrow\phi^{-1}(\underline{v})\). For all \(n\), \(\nu(\bigcup_{m\geq n}L_m)\leq\nu(L_n)\), so \(\nu(\bigcup_{m\geq n}L_m)\to\phi^{-1}(\underline{v})\) as \(n\to\infty\). Let \(\bar{L}=\bigcap_{n=1}^\infty\bigcup_{m\geq n}L_m\). It is a lower contour set in \(A\). Moreover, \(\nu(\bigcup_{m\geq n}L_m)\to\nu(\bar{L})\), so \(\nu(\bar{L})=\phi^{-1}(\underline{v})\), i.e., \(\xi(\bar{L})=\underline{v}\). Hence, \(\bar{L}\) is a minimizer of (\ref{eq.a11}). Since there are only finitely many minimal elements in \(A\), and any lower contour set in \(A\) contains at least one minimal element of \(A\), (\ref{eq.a10}) admits a solution. By Lemma \ref{lmm.a5}, (\ref{eq.a10}) has a largest minimizer.
\end{proof}

\subsection{Proof of Theorem \ref{thm4}}
\begin{proof}
	Without loss of generality, let \(E\) be connected. Let \(\mathscr{A}\) be the collection of all \(\mathcal{A}\) such that \(\mathcal{A}\) is a partition of some lower contour set in \(E\) consisting of connected sets and satisfying the following:
	\begin{enumerate} [label=(\alph*)]
		\item Every \(A\in\mathcal{A}\) is a lower contour set in \(E\setminus\bigcup\{A'\in\mathcal{A}:\xi(A')<\xi(A)\}\);
		\item For all \(A\in\mathcal{A}\) and all connected lower contour sets \(K\) in \(E\setminus\bigcup\{A'\in\mathcal{A}:\xi(A')<\xi(A)\}\) such that \(A\subset K\), \(\xi(K)>\xi(A)\);
		\item For all \(A\in\mathcal{A}\) and all lower contour sets \(L\) in \(A\), \(\xi(L)\geq\xi(A)\).
	\end{enumerate}
	By Lemma \ref{lmm.a6}, \(\mathscr{A}\) is nonempty, since \(\{A_0\}\in\mathscr{A}\), where \(A_0\) is the largest lower contour set in \(E\) that minimizes \(\xi\). Moreover, \(\mathscr{A}\) is partially ordered by inclusion \(\subset\), and every chain \(\mathscr{C}=\{\mathcal{A}_\alpha\}_{\alpha\in\mathscr{I}}\) in \((\mathscr{A},\subset)\) has an upper bound \(\cup_{\alpha\in\mathscr{I}}\mathcal{A}_\alpha\). By Zorn's lemma, \((\mathscr{A},\subset)\) has a maximal element. Let \(\mathcal{A}^\star\) be one maximal element of \((\mathscr{A},\subset)\).

	We show that \(\bigcup\mathcal{A}^\star = E\). Suppose not, then there exists a partition \(\mathcal{P}\) of \(E\setminus\bigcup\mathcal{A}^\star\) such that every \(P\in\mathcal{P}\) is connected, and for all distinct \(P_1,P_2\in\mathcal{P}\), \(P_1\cup P_2\) is not connected.\footnote{In other words, \(\mathcal{P}\) is the coarsest partition of \(E\setminus\bigcup\mathcal{A}^\star\) consisting of connected sets.} By assumption, every \(P\in\mathcal{P}\) has finitely many minimal elements. By Lemma \ref{lmm.a6}, we can find \(A(P)\) for every \(P\in\mathcal{P}\) such that \(A(P)\) is the largest lower contour set in \(P\) that minimizes \(\xi\). Let \(\mathcal{A}'=\mathcal{A}^\star\cup\{A(P):P\in\mathcal{P}\}\). We verify that \(\mathcal{A}'\in\mathscr{A}\). Since \(E\setminus\bigcup\{A'\in\mathcal{A}':\xi(A')<\xi(A)\}\subset E\setminus\bigcup\{A'\in\mathcal{A}^\star:\xi(A')<\xi(A)\}\), and \(\mathcal{A}^\star\in\mathscr{A}\), conditions (a) through (c) are satisfied if \(A\in\mathcal{A}^\star\). If \(A=A(P)\) for some \(P\in\mathcal{P}\), then \(P\) is a maximal connected subset of \(E\setminus\bigcup\{A'\in\mathcal{A}':\xi(A')<\xi(A)\}\).\footnote{That is, \(P\subset E\setminus\bigcup\{A'\in\mathcal{A}':\xi(A')<\xi(A)\}\), and \(\nexists Q\subset E\setminus\bigcup\{A'\in\mathcal{A}':\xi(A')<\xi(A)\}\) such that \(Q\) is connected and \(P\subsetneq Q\). Suppose that this is not true, then either (1) there exists \(\hat{A}\in\mathcal{A}'\) such that \(\xi(\hat{A})<\xi(A)\) and \(P\cap\hat{A}\neq\emptyset\), or (2) there exists \(\hat{A}\in\mathcal{A}'\) such that \(\xi(\hat{A})\geq\xi(A)\) and \(P\cup\hat{A}\) is connected. (1) is impossible, since \(P\cap(\bigcup\mathcal{A}^\star)=\emptyset\), and for all \(P'\neq P\) in \(\mathcal{P}\), \(P\cap A(P')\subset P\cap P'=\emptyset\). (2) is also impossible. If \(\hat{A}\in\mathcal{A}^\star\), then \(A\cup\hat{A}\) is a connected lower contour set in \(E\setminus\{A'\in\mathcal{A}^\star:\xi(A')<\xi(\hat{A})\}\), and \(\xi(A\cup\hat{A})\leq\xi(\hat{A})\), which contradicts \(\hat{A}\in\mathcal{A}^\star\in\mathscr{A}\). If \(\hat{A}=A(P')\) for some \(P'\in\mathcal{P}\), then \(P\cup P'\) is connected, which is a contradiction to the definition of \(\mathcal{P}\).} Moreover, \(P\) is a lower contour set in \(E\setminus\bigcup\{A'\in\mathcal{A}':\xi(A')<\xi(A)\}\).\footnote{Otherwise, there exists \(P'\neq P\) in \(\mathcal{P}\), \(e\in P\), and \(e'\in P'\), such that \(e'\precsim e\). Then \(P\cup P'\) is connected, which is a contradiction.} Thus, by construction, conditions (a) through (c) are satisfied. Hence, \(\mathcal{A}'\in\mathscr{A}\), and \(\mathcal{A}'\supsetneq\mathcal{A}^\star\). This is a contradiction to maximality of \(\mathcal{A}^\star\). Therefore, \(\bigcup\mathcal{A}^\star = E\).

	Finally, we construct the equilibrium partition from \(\mathcal{A}^\star\). Let \(\mathcal{A}^{\star\star}=\big\{\bigcup\{A'\in\mathcal{A}^\star:\xi(A')=\xi(A)\}:A\in\mathcal{A}^\star\big\}\). That is, \(\mathcal{A}^{\star\star}\) is a coarser partition of \(E\) than \(\mathcal{A}^\star\), in which we take unions over sets in \(\mathcal{A}^\star\) that have a same value of \(\xi\). On \(\mathcal{A}^{\star\star}\), \(\xi\) takes distinct values, which is not necessarily the case on \(\mathcal{A}^\star\). Hence, \(A_1\leq_P A_2\Leftrightarrow \xi(A_1)\leq \xi(A_2)\) defines a total order \(\leq_P\) on \(\mathcal{A}^{\star\star}\). Moreover, by construction, every \(A\in\mathcal{A}^{\star\star}\) is the largest lower contour set in \(E\setminus\bigcup_{A'<_P A}A'\) that minimizes \(\xi\). Therefore, by Theorem \ref{thm3}, an equilibrium exists, and \((\mathcal{A}^{\star\star},\leq_P)\) is the equilibrium partition.
\end{proof}

\subsection{Proof of Proposition \ref{pro4}}
\begin{proof}
	Since \(M\) is non-decreasing and \(p>q\), \(V\) is non-decreasing. Let \(E_n=\{e\in E:M(e)=n\}\) for every \(n\in\mathbb{N}\). For all \(x\in Range(V)\), \(V^{-1}(x)=E_n\) for some \(n\in\mathbb{N}\). We are to verify conditions 2 and 3 of Theorem \ref{thm1} for every \(E_n\).

	Notice that \(E_0=\{\emptyset\}\cup\{(b,h):h\in E_0\cup E_1\}\), and \(E_n=\{(g,h):h\in E_{n-1}\}\cup\{(b,h):h\in E_{n+1}\}\) for all \(n\geq 1\). Hence, for \(j=1,2\),
	\begin{gather*}
		F_j(E_0)=(1-p-q)+p^{\mathbb{I}_{\{j=1\}}}q^{\mathbb{I}_{\{j=2\}}}(F_j(E_0)+F_j(E_1)), \\
		F_j(E_n)=p^{\mathbb{I}_{\{j=2\}}}q^{\mathbb{I}_{\{j=1\}}} F_j(E_{n-1})+p^{\mathbb{I}_{\{j=1\}}}q^{\mathbb{I}_{\{j=2\}}} F_j(E_{n+1})~\text{for all \(n\geq 1\)},
	\end{gather*}
	and
	\begin{equation*}
		\sum_{n=0}^\infty F_j(E_n)=1.
	\end{equation*}
	It follows that
	\begin{equation*}
		F_1(E_n)=\frac{p-\alpha}{p}\left(\frac{\alpha}{p}\right)^n; ~
		F_2(E_n)=\frac{q-\alpha}{q}\left(\frac{\alpha}{q}\right)^n.
	\end{equation*}
	Hence, for all \(n\in\mathbb{N}\) and \(e\in E_n\),
	\begin{equation*}
		\xi(E_n)=\nu_2(E_n)=\frac{1}{1+\frac{\pi_1}{\pi_2}\frac{p-\alpha}{q-\alpha}\left(\frac{q}{p}\right)^{n+1}}=V(e).
	\end{equation*}
	Condition 2 is satisfied.

	We now verify condition 3. Notice that for all \(n\in\mathbb{N}\) and \(e\in E_n\), \(e\) can be uniquely decomposed into two parts, as follows:
	\begin{equation*}
		e=(e_0,e_n^{\min}),
	\end{equation*}
	where \(e_0\in E_0\), and \(e_n^{\min}\in\min(E_n,\precsim)\). Therefore, every lower counter set \(L\) in \(E_0\) can be partitioned into at most countably many subsets, each having the form
	\begin{equation}\label{eq.c1}
		\{(e_0,e_n^{\min}):e_0\in L_0\},
	\end{equation}
	where \(e_n^{\min}\in\min(E_n,\precsim)\), and \(L_0\) is a lower contour set in \(E_0\). Hence, it suffices to show that, for all sets \(L'\) of the form (\ref{eq.c1}), \(\xi(L')=\nu_2(L')\geq\nu_2(E_n)=\xi(E_n)\). Notice that
	\begin{equation*}
		\nu_2(L')=\frac{1}{1+\frac{\pi_1}{\pi_2}\frac{F_1(L_0)}{F_2(L_0)}\left(\frac{q}{p}\right)^n},
	\end{equation*}
	and
	\begin{equation*}
		\nu_2(E_n)=\frac{1}{1+\frac{\pi_1}{\pi_2}\frac{p-\alpha}{q-\alpha}\left(\frac{q}{p}\right)^{n+1}}.
	\end{equation*}
	It is equivalent to show that \(\nu_2(L_0)\geq\nu_2(E_0)\) for all lower contour sets \(L_0\) in \(E_0\).

	Suppose that this is not true. Then for some small \(\varepsilon>0\), there exists a lower contour set \(L_0\) in \(E_0\) such that \(\nu_2(L_0)\leq\nu_2(E_0)-\varepsilon\). Without loss of generality, choose
	\begin{equation*}
		\varepsilon<\nu_2(E_0)-\frac{1}{1+\frac{\pi_1}{\pi_2}\frac{p-\alpha}{q-\alpha}}.
	\end{equation*}
	Denote by \(\mathcal{L}_\varepsilon\) the nonempty set of all lower contour sets \(L_0\) in \(E_0\) such that \(\nu(L_0)\leq\nu(E_0)-\varepsilon\). Notice that every chain \(\{L_\alpha\}_{\alpha\in\mathscr{I}}\) in \((\mathcal{L}_\varepsilon,\subset)\) has an upper bound \(\cup_{\alpha\in\mathscr{I}}L_\alpha\). Therefore, there exists a maximal element \(L_\varepsilon^\star\) in \(\mathcal{L}_\varepsilon\) by Zorn's lemma. To obtain a contradiction, let us discuss the following two cases.

	If there exists \(e^\star\in\min(E_0\setminus L_\varepsilon^\star,\precsim)\) such that \(D(e^\star)<0\), then \(\tilde{L}_\varepsilon := L_\varepsilon^\star \cup \{(e_0,e^\star):e_0\in E_0\}\) is a lower contour set in \(E_0\), and \(\nu_2(\tilde{L}_\varepsilon)\leq\nu_2(E_0)-\varepsilon\), since
	\begin{equation*}
		\nu_2(\{(e_0,e^\star):e_0\in E_0\}) = \frac{1}{1+\frac{\pi_1}{\pi_2}\frac{p-\alpha}{q-\alpha}\left(\frac{q}{p}\right)^{D(e^\star)+1}} \leq \frac{1}{1+\frac{\pi_1}{\pi_2}\frac{p-\alpha}{q-\alpha}} < \nu_2(E_0)-\varepsilon.
	\end{equation*}
	Hence, \(\tilde{L}_\varepsilon\in\mathcal{L}_\varepsilon\). But \(L_\varepsilon^\star\subsetneq\tilde{L}_\varepsilon\), a contradiction to maximality of \(L_\varepsilon^\star\).

	If \(D(e^\star)=0\) for all \(e^\star\in\min(E_0\setminus L_\varepsilon^\star,\precsim)\), then
	\begin{equation*}
		E_0\setminus L_\varepsilon^\star = \bigcup_{e^\star\in\min(E_0\setminus L_\varepsilon^\star,\precsim)} \{(e_0,e^\star):e_0\in E_0\}.
	\end{equation*}
	Since \(\nu_2(\{(e_0,e^\star):e_0\in E_0\})=\nu_2(E_0)\) for all \(e^\star\in\min(E_0\setminus L_\varepsilon^\star,\precsim)\), \(\nu_2(E_0\setminus L_\varepsilon^\star)=\nu_2(E_0)\). Hence, \(\nu_2(L_\varepsilon^\star)=\nu_2(E_0)\), a contradiction.

	Therefore, \(\nu_2(L_0)\geq\nu_2(E_0)\) for all lower contour sets \(L_0\) in \(E_0\), and we verify condition 3.
\end{proof}

\subsection{Proof of Proposition \ref{pro5}}
\begin{proof}
	1 is shown in the proof of Lemma \ref{lmm.a3}, and 2 is by calculation found in the main text. 3 is clearly true if for sender type \(e\), it is optimal to disclose truthfully. Hence, the proof is concluded once we show that 3 holds for evidence endowment \(\hat{e}\) such that \(\xi(\hat{e})<V(\hat{e})\). Let \(\hat{e}\in E_n\), and let \(0\leq k_1<k_2<\dots<k_l=k^\star\) be the elements in \(\argmax_kD(\hat{e}|_k)\). For each \(1\leq i\leq l\), let \(K_i:=\{(e_0^-,\hat{e}|_{k_i}):e_0^-\in E_0^-\}\), where \(E_0^-=\{\emptyset\}\cup\{e:D(e|_k)<0,~\forall 0<k\leq L(e)\}\). It is equivalent to show that \(supp(\sigma^\star(\cdot|\hat{e}))\subset K_l\), so that \(\hat{e}|_{k^\star}\precsim m\) for all \(m\in supp(\sigma^\star(\cdot|\hat{e}))\). We show this by induction.

	First, notice that for all \(e\subset K_1\), sender optimality requires that \(supp(\sigma^\star(\cdot|e))\subset K_1\). Now, assume that for some \(i<l\), we have \(supp(\sigma^\star(\cdot|e))\subset K_j\) for all \(e\in K_j\) and \(1\leq j\leq i\). Suppose that there exist \(e\in K_{i+1}\), \(j\leq i\), and \(m\in K_j\), such that \(\sigma^\star(m|e)>0\), and argue by contradiction. Notice that for all \(m\in K_j\), \(\mu_2(m)=V(\hat{e})\), so
	\begin{equation*}
		V(\hat{e})=\frac{\left(\sum_{e\in K_j}\sum_{m\in K_j}+\sum_{e\notin K_j}\sum_{m\in K_j}\right)\left(\sigma^\star(m|e)F_2(e)\pi_2\right)}{\left(\sum_{e\in K_j}\sum_{m\in K_j}+\sum_{e\notin K_j}\sum_{m\in K_j}\right)\left(\sigma^\star(m|e)F(e)\right)}.
	\end{equation*}
	By assumption,
	\begin{equation*}
		\frac{\sum_{e\in K_j}\sum_{m\in K_j}\sigma^\star(m|e)F_2(e)\pi_2}{\sum_{e\in K_j}\sum_{m\in K_j}\sigma^\star(m|e)F(e)}=\nu_2(K_j)=V(\hat{e}),
	\end{equation*}
	and
	\begin{equation*}
		\frac{\sum_{e\notin K_j}\sum_{m\in K_j}\sigma^\star(m|e)F_2(e)\pi_2}{\sum_{e\notin K_j}\sum_{m\in K_j}\sigma^\star(m|e)F(e)}<V(\hat{e}).
	\end{equation*}
	Therefore, \(V(\hat{e})<V(\hat{e})\), which is a contradiction. Hence, it must be the case that for all \(e\in K_{i+1}\), \(supp(\sigma^\star(\cdot|e))\subset K_{i+1}\). By induction, this holds for \(K_l\). Since \(\hat{e}\in K_l\), \(supp(\sigma^\star(\cdot|\hat{e}))\subset K_l\).
\end{proof}

\subsection{Proof of Proposition \ref{pro7}}
\begin{proof}
	For 1, notice that
	\begin{equation} \label{eq.c4}
		\begin{aligned}
			\frac{p-\alpha}{q-\alpha} &= \frac{2p-1+\sqrt{1-4pq}}{2q-1+\sqrt{1-4pq}} \\
			&= \frac{(\sqrt{1-4pq}+2p-1)(\sqrt{1-4pq}-2q+1)}{(1-4pq)-(2q-1)^2} \\
			&= \frac{2(p+q)-8pq+2(p-q)\sqrt{1-4pq}}{4q(1-p-q)} \\
			&= \frac{1}{2}(\gamma+1)\kappa-2\frac{\gamma}{\gamma+1}(\kappa-1)+\frac{1}{2}(\gamma-1)\sqrt{\Delta} \\
			&= \frac{(\gamma-1)^2}{2(\gamma+1)}\kappa + \frac{2\gamma}{\gamma+1} + \frac{1}{2}(\gamma-1)\sqrt{\Delta},
		\end{aligned}
	\end{equation}
	where
	\begin{equation*}
		\Delta = \kappa^2-\frac{4\gamma}{(\gamma+1)^2}(\kappa-1)^2
	\end{equation*}
	is increasing in \(\kappa\). Hence, (\ref{eq.c4}) is increasing in \(\kappa\), and \(V(e)\) given by (\ref{eq2.9}) is decreasing in \(\kappa\).

	For 2, let us first consider the case of \(M(e)=M>0\). Notice that \(\sqrt{\Delta}\) is bounded from above by \(\kappa\) and from below by \(\frac{\gamma-1}{\gamma+1}\kappa\). Applying the lower bound,
	\begin{equation*}
		\frac{p-\alpha}{q-\alpha} > \frac{(\gamma-1)^2}{\gamma+1}\kappa+\frac{2\gamma}{\gamma+1}.
	\end{equation*}
	Applying both bounds and omitting the negative term containing \(\frac{1}{\kappa}\),
	\begin{equation} \label{eq.c6}
		\begin{aligned}
			\frac{\partial}{\partial\gamma}\left(\frac{p-\alpha}{q-\alpha}\right)
			&= \frac{1}{2}\kappa-\frac{2}{(\gamma+1)^2}(\kappa-1)+\frac{1}{2}\sqrt{\Delta}+\frac{1}{\sqrt{\Delta}}\frac{\gamma-1}{(\gamma+1)^3}(\kappa-1) \\
			&< \frac{\gamma^2+2\gamma-1}{(\gamma+1)^2}\kappa + \frac{3}{(\gamma+1)^2}.
		\end{aligned}
	\end{equation}
	Hence,
	\begin{equation} \label{eq.c7}
		\begin{aligned}
			&\frac{\partial}{\partial\gamma}\left[\frac{p-\alpha}{q-\alpha}\left(\frac{q}{p}\right)^{M+1}\right] \\
			=& \frac{1}{\gamma^{M+1}}\left[\frac{\partial}{\partial\gamma}\left(\frac{p-\alpha}{q-\alpha}\right)-(M+1)\frac{p-\alpha}{q-\alpha}\frac{1}{\gamma}\right] \\
			<& \frac{1}{\gamma^{M+1}}\left[\frac{\gamma^2+2\gamma-1}{(\gamma+1)^2}\kappa + \frac{3}{(\gamma+1)^2} - (M+1)\left(\frac{(\gamma-1)^2}{\gamma(\gamma+1)}\kappa+\frac{2}{\gamma+1}\right)\right] \\
			=& \frac{[-M\gamma^3+(M+3)\gamma^2+M\gamma-(M+1)]\kappa-\gamma[2(M+1)(\gamma+1)-3]}{\gamma^{M+2}(\gamma+1)^2}.
		\end{aligned}
	\end{equation}
	Notice that \(-M\gamma^3+(M+3)\gamma^2+M\gamma-(M+1)\) has three zeros: \(\gamma_1<0<\gamma_2<1<\gamma_3\). When \(\gamma\geq\gamma_3\), the last line of (\ref{eq.c7}) is negative for all \(\kappa>1\). By continuity, there exists \(\hat{\gamma}\in[1,\gamma_3]\) such that when \(\gamma\geq\hat{\gamma}\), the first line of (\ref{eq.c7}) is negative, so \(V(e)\) given by (\ref{eq2.9}) is increasing in \(\gamma\). Notice that \(\gamma_3\) is strictly decreasing in \(N\) and converges to 1 as \(M\to\infty\). Therefore, \(\hat{\gamma}\to1\) as \(M\to\infty\). Lastly, we observe that \(\hat{\gamma}>1\). By (\ref{eq.c4}), \(\frac{p-\alpha}{q-\alpha}\to1\) as \(\gamma\downarrow 1\); by the first line of (\ref{eq.c6}), \(\frac{\partial}{\partial\gamma}\left(\frac{p-\alpha}{q-\alpha}\right)\to\frac{1}{2}+\frac{1}{2}\sqrt{2\kappa-1}\) as \(\gamma\downarrow 1\); therefore, by the first equality in (\ref{eq.c7}),
	\begin{equation} \label{eq.c8}
		\frac{\partial}{\partial\gamma}\left[\frac{p-\alpha}{q-\alpha}\left(\frac{q}{p}\right)^{M+1}\right] \to \frac{1}{2}\sqrt{2\kappa-1}-\frac{1}{2} - M
	\end{equation}
	as \(\gamma\downarrow 1\). For \(\kappa>2M^2+2M+1\), the right hand side of (\ref{eq.c8}) is positive. By continuity, there is a small neighborhood to the right of 1 where \(V(e)\) is not increasing in \(\gamma\) for all \(\kappa\), so \(\hat{\gamma}>1\).

	For the case of \(M(e)=0\), we compute the following:
	\begin{align}
		\frac{\partial}{\partial p}\left[\frac{p-\alpha}{q-\alpha}\left(\frac{q}{p}\right)^{M+1}\right] &= \left(\frac{q}{p}\right)^{M+1}\frac{1}{q-\alpha}\left[1+\frac{p-q}{q-\alpha}\frac{q}{1-2\alpha}-(M+1)\frac{p-\alpha}{p}\right], \label{eq.c9}\\
		\frac{\partial}{\partial q}\left[\frac{p-\alpha}{q-\alpha}\left(\frac{q}{p}\right)^{M+1}\right] &= \left(\frac{q}{p}\right)^{M+1}\frac{1}{q-\alpha}\left[(M+1)\frac{p-\alpha}{q} -\frac{p-\alpha}{q-\alpha} - \frac{p-q}{q-\alpha}\frac{p}{1-2\alpha}\right]. \label{eq.c10}
	\end{align}
	When \(M=0\), the bracket on the right hand side of (\ref{eq.c9}) is positive, and the bracket on the right hand side of (\ref{eq.c10}) is negative. Hence, if \(M(e)=0\), \(V(e)\) is decreasing in \(p\) and increasing in \(q\), so it is decreasing in \(\gamma\).

	We continue to show 3. Notice that the bracket on the right hand side of (\ref{eq.c9}) is increasing in \(p\) and decreasing in \(M\). Hence, there exists \(\hat{p}\in[q,1-q]\) such that (\ref{eq.c9}) is negative when \(p<\hat{p}\) and positive when \(p>\hat{p}\). Moreover, \(\hat{p}\) is non-decreasing in \(M\). That is, \(V(e)\) is increasing in \(p\) when \(p<\hat{p}\) and decreasing in \(p\) when \(p>\hat{p}\). Finally, notice that as \(p\uparrow 1-q\), \(\kappa\to\infty\). Hence, by (\ref{eq.c4}), \(\frac{p-\alpha}{q-\alpha}\to\infty\), and \(V(e)\to 0\). Therefore, \(V(e)\) is decreasing in \(p\) on a small neighborhood to the left of \(1-q\), so \(\hat{p}<1-q\).
\end{proof}

\bibliography{Disclosure_Games}

\end{document}